\newtheorem{lemma}{Lemma}
\newtheorem{theorem}{Theorem}
\newtheorem{definition}{Definition}
\newcommand{\BibTeX}{B\kern-.05em{\sc i\kern-.025em b}\kern-.08em\TeX}
\begin{document}


\begin{frontmatter}

\paperid{123} 


\title{A Dynamic Service Offloading Algorithm Based on Lyapunov Optimization in Edge Computing}

\author[A]{\fnms{Peiyan}~\snm{Yuan}}
\author[B]{\fnms{Ming}~\snm{Li}}
\author[C,D]{\fnms{Chenyang}~\snm{Wang}\thanks{Corresponding Author. Email: chenyangwang@ieee.org}} 
\author[D]{\fnms{Ledong}~\snm{An}} 
\author[A]{\fnms{Xiaoyan}~\snm{Zhao}} 
\author[A]{\fnms{Junna}~\snm{Zhang}} 
\author[E]{\fnms{Xiangyang}~\snm{Li}} 
\author[F]{\fnms{Huadong}~\snm{Ma}} 

\address[A]{Henan Normal University}
\address[B]{Xidian University}
\address[C]{Shenzhen University}
\address[D]{Guangdong Laboratory of Artificial Intelligence and Digital Economy (SZ)}
\address[E]{University of Science and Technology of China}
\address[F]{Beijing University of Posts and Telecommunications}




\begin{abstract}
This study investigates the trade-off between system stability and offloading cost in collaborative edge computing. While collaborative offloading among multiple edge servers enhances resource utilization, existing methods often overlook the role of queue stability in overall system performance. To address this, a multi-hop data transmission model is developed, along with a cost model that captures both energy consumption and delay. A time-varying queue model is then introduced to maintain system stability. Based on Lyapunov optimization, a dynamic offloading algorithm (LDSO) is proposed to minimize offloading cost while ensuring long-term stability. Theoretical analysis and experimental results verify that the proposed LDSO achieves significant improvements in both cost efficiency and system stability compared to the state-of-the-art.
\end{abstract}

\end{frontmatter}


\section{Introduction}

The rapid proliferation of fifth-generation (5G) mobile communication technology has significantly increased the number of network-connected devices and enhanced society's perceptive capabilities~\citep{duan2018treasure}.
However, conventional cloud computing architectures struggle to meet the stringent requirements for large-scale, real-time data processing and energy efficiency. 
Meanwhile, backbone networks are experiencing unprecedented resource constraints~\citep{chen2024joint}. 
To address these challenges, edge computing has emerged as a promising paradigm, deploying edge servers in close proximity to data sources to provide services such as content caching, computation, and distribution. This paradigm effectively alleviates core network congestion and reduces user-perceived latency~\citep{wang2024dependency,wang2023heterogeneous,wang2023mission}.

Service offloading in edge computing generally falls into two categories: horizontal and vertical~\citep{r11}. 
Horizontal offloading refers to collaborative processing among multiple edge servers, commonly termed multi-edge server cooperative offloading. Vertical offloading, on the other hand, involves coordinated decision-making between edge servers and cloud servers, often via negotiation mechanisms~\citep{r12,r14,r15,yu2022deterministic}. 
The vertical approach is suitable for computation-intensive tasks that benefit from the cloud’s vast processing capacity, whereas the horizontal approach is more effective for distributed processing of large-scale data with stringent latency requirements~\citep{sun2024hierarchical}. 
In recent years, the horizontal paradigm has attracted increasing research attention due to its potential for low-latency, high-throughput service delivery~\citep{r18,chen2022dynamic,chen2021multitask}.
	

Despite the advantages of horizontal cooperation, dynamic and heterogeneous network environments introduce significant challenges. Fluctuations in service requests and task arrival rates hinder the timely adaptation of traffic distribution among edge nodes~\citep{r23,r24,r25}. 
Moreover, processing resource-intensive tasks may cause buffer accumulation and link congestion, resulting in system instability, increased response times, and higher energy consumption~\citep{r1234,r30,r31}. 
These issues highlight the necessity of designing efficient service offloading schemes that jointly consider resource availability, workload variations, and system stability. 
Striking an optimal balance between dynamic traffic control and overall system cost in multi-edge cooperative scenarios remains a challenging and unresolved problem.
	
	
This study investigates the minimum offloading cost problem of edge server collaborative offloading and further transforms it into a stochastic optimization problem related to delay and energy consumption. 
In order to achieve the long-term optimal solution between computing and communication costs in the edge offloading system, the Lyapunov optimization framework is employed to limit the queue length of edge servers through real-time offloading decisions. 
The main contributions of this study are summarized as follows.
\vspace{-0.5cm}
\begin{itemize}	
    \item A time-varying queue model is proposed to reflect the dynamic state of the buffers. In addition, a system cost model that encompasses factors of energy consumption and delay is formulated, associated with data transmission, receiving, and processing costs.
    \item The Lyapunov function cluster (\textbf{\emph{Quadratic, Drift and Drift-Plus-Penalty}}) is used to balance the buffered data volume and system cost. An upper bound on the system cost and the stability of queues are rigorously analyzed in theory, respectively.
    \item A greedy matching algorithm is proposed to select the qualified edge servers, and the convergence and stability of the algorithm are proven. In addition, a large number of experimental results verify the asymptotic optimally of the proposed algorithm.
\end{itemize}	

\section{Related Work}

This section introduces related work from two aspects of vertical collaboration and horizontal collaboration.
Vertical collaborative offloading can make full use of the powerful computing resources of the cloud center. 
It is more suitable for applications with strict requirements on computing resources rather than those sensitive to delay, such as training offline machine learning models.

The authors of~\citep{r32} used Little's law and the queue length to evaluate delay, but the resource availability of edge servers is ignored. Literatures of~\citep{r35,r36} took energy consumption, delay and cost as the performance index to offloading services respectively, which are not suitable for some complex and real-time offloading scenarios. References~\citep{r39,r44,r45} considered the system delay and cost,but ignored the energy consumption per bit of data in cloud-edge systems with dynamic workloads. 
All the above studies focus on improving the performance of offloading schemes/algorithms, failing to consider the limited service storage and data processing capabilities of edge servers under long-term constraints. Meanwhile, most of these works do not consider the collaboration among multiple edge servers.
	
Horizontal collaborative offloading can take advantage of the resources of multiple edge servers, and process user or terminal requests with lower latency and higher efficiency. 
In horizontal cooperative offloading, it is a challenging problem to make a balance between the stability of the edge offloading system and the offloading performance. 
For example, the authors of~\citep{r43,r47,r49} focused on terminal queues, which include vehicles, wearable devices and smartphones, etc. Works in~\citep{r26,r32,r36} evaluated the stability of single edge server, but they do not consider the dynamic data flow changes and long-term constraints between multiple edge servers. Literatures~\citep{r23,r37,r38,r40,r41,r43,r44} created queues for multiple edge servers, and considered the trade-off between power consumption and execution delay in multi-user/device MEC systems. However, the long-term system stability is ignored.

In general, the works~\citep{r44,r43,r41} all considered the cost, but some works considered a single cost optimization objective~\citep{r37,r38,r40,r45}, while others considered both cost and other performance indicators.  
For example, from an economic point of view, \citep{r37} considered the cost of renting data resources of the edge server for data caching and the migration cost of data transmission to the edge server, and obtained the upper limitation of the cost through Lyapunov drift optimization. 
The authors of~\citep{r40} considered the cost of data uploading and processing of multiple time slices from an economic point of view, and solved the cost through Lyapunov optimization and distributed Markov approximation methods. 
From the perspective of queues, \citep{r38} constructed and maintained a virtual queue of service migration cost for each base station, where the cost includes two parts: dynamic deployment of services and additional migration. The trade-off between migration cost and system stability was obtained through Lyapunov optimization and Chernoff boundary theory.  
The cost composition of included three parts: service replacement, storage, and maintenance in~\citep{r45}. 
Through Lyapunov drift and regularization, the non-convex optimization problem was transformed into convex optimization, and then the cost was solved.
However, the above studies did not consider the energy consumption caused by each bit of data. 
	
\section{Architecture of Edge Offloading System}
The system model is shown in Figure~\ref{fig1}. The edge service offloading system architecture includes one macro base station (BS), $M$ small base stations (SBS), and $n$ users. $M$ SBSs are covered by the BS. Unlike most existing single-hop service offloading, tasks can be finished by the cooperative edge server, and multiple SBSs jointly provide services to users through horizontal cooperation. 
When users send a service request to a nearby SBS, if the SBS deploys the user's requested service, it processes the service locally. Otherwise, the edge server sends a request to other edge servers for help.
	
Users and SBSs follow the PPP distribution with intensity $\lambda_{ u}$ and $\lambda_{e}$, respectively. Assuming that the BS can access a database $F=\left \{ F_{1}, F_{2},..., F_{K } \right \}$ with $K$ independent services, since different services require different storage space, the size of the $k$-th type of service is $b_{k}$ bits and the caching capacity of each SBS is $B_{i}$ bits. Assume that there are $a_{k}$ total of SBSs that provide the $k$-th type of service, and $f_{i}^{k}=\left \{ f_{1}^{k},f_{2}^{k},...,f_{M}^{k} \right \}$ is the service placement vector, indicating whether the $k$-th service data is cached in the $i$-th SBS. Therefore,  ${\textstyle \sum_{i=1}^{M}}f_{i}^{k}(t)=a_{k}$ is the number of providers of the $k$-th type of service in the system.
\begin{figure}[t]
    \includegraphics[width=3in,height=2in]{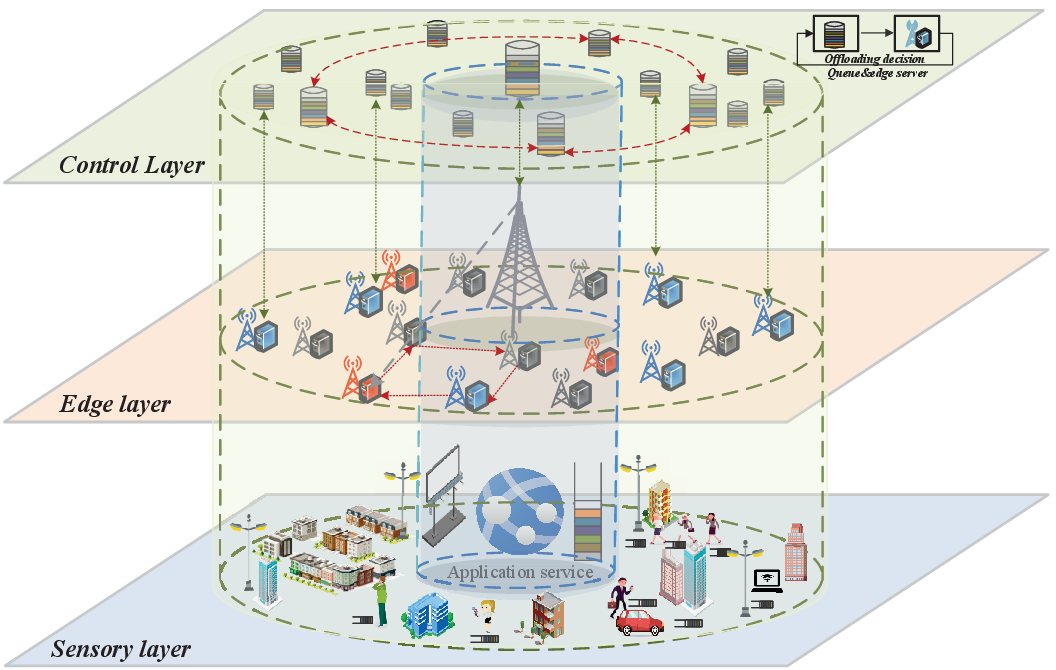}
\caption{Edge service offloading architecture.}
\label{fig1}	
\vspace{0.39cm}	
\end{figure}
When users in the system send requests for the $k$-th service to nearby edge servers, $x_{i}^{k}(t)=1$ represents the processing of the $k$-th type of service in the $i$-th edge server. Conversely, 0 represents that the $i$-th edge server cannot process it, and the service request needs to be collaboratively processed by multiple edge servers and the cloud center.

\section{Problem Modeling and Analysis}
\subsection{Problem statement}
To reduce the long-term edge offloading system's cost $Cost(t)$, the problem can be formulated as follows.
\begin{align} 
    {\min} &\  \lim _{T \rightarrow\infty} \frac{1}{T} \sum_{t=0}^{T-1} E\{\operatorname{Cost}(t)\} \label{eq1}\\ 
    s.t.~ 
    &  \bar{Q}=\lim _{T \rightarrow \infty} \frac{1}{T} \sum_{t=0}^{T-1} \sum_{i=1}^{M} E\left\{Q_{i}(t)\right\}<+\infty
    \label{eq2} \tag{1a} \\ 
    &\forall i \in[1, M], Q_{i}(t) \leq Q_{i}^{\max}
    \label{eq3}  \tag{1b}\\ 
    & \sum_{k=1}^{K} n_{i}^{k}(t) b_{k} \leq\left(Q_{i}^{\max }-Q_{i}(t)\right)\le \mu_{i}(t), \forall i \in[1, M]
    \label{eq4} \tag{1c}\\ 
    & \forall i \in[1, M], \mu_{i}(t) \leq \mu_{i}^{\max}
    \label{eq5} \tag{1d} \\
    & \forall t \in[0, T-1],E^c(t)+E^p(t) \leq E^{\max} \tag{1e} \label{eq6-1}\\
    & T^c(t)+T^p(t) \leq T^{\max} \tag{1f} \label{eq6-2}\\
    & \forall k\in\left [1, K \right], \sum_{i=1}^{M} x_{i}^{k}(t)=1
\label{eq7} \tag{1g} 
\end{align}	
Constraint (\ref{eq2}) represents the stability of the buffer queue, (\ref{eq3}) and (\ref{eq4}) indicate the constraints of the buffered data and the processing ability of the edge server, respectively. 
When the pending service data reaches the maximum, the delay increases, and a large amount of data may be lost. 
Constraint (\ref{eq5}) represents that the processing capacity of each edge server must not exceed its maximum value. 
Constraints (\ref{eq6-1}) and (\ref{eq6-2}) ensure that the energy consumption and delay of the system do not exceed their maximum value in each time slot. (\ref{eq7}) represents the 0-1 offloading constraint of the service, where each type of service can only be offloaded to one edge server. In the following section, the queue model will be analyzed in detail.
	
This study investigates energy consumption in data transmission and processing, which inevitably involves computation and communication delays. Therefore, a system cost model including energy consumption and delay is constructed. At the same time, it is reasonable to assume that IoT terminals have little computing capability, so terminal computing is not considered, \emph{i.e.,}
\begin{equation}
    \operatorname{Cost}(t)=\theta \cdot E_{total}(t)+(1-\theta )\cdot T_{total}(t)
\label{eq8}
\end{equation}
where $\theta$ is a weight factor that adjusts the ratio of energy consumption to delay in the cost and $t\in \left \{0,1,2,..., T-1 \right \}$.  $E_{total} (t)$ and $T_{total} (t)$ are the total energy consumption and delay, respectively.
	
When users send service requests to an edge server, the offloading decision $x_{i}^{k}(t) =  \{ 0, 1\}$, where 1 represents that the $k$-th type of service is offloaded to the adjacent edge server $i$ at time slot $t$, where $i \in \left \{1,2,..., M \right \} $, and conversely, 0 represents that the edge server cannot process the request, and the service offloading is performed through multi-hop collaboration.
	
This study uses a time-varying service request probability model~\cite{r50} to calculate the probability of the $k$-th service type, rather than the traditional Zipf distribution. The number of users requesting the $k$-th type of service at time $t$ is represented as $n_{i}^{k}(t)=n_{i}\times P_{k}(t)$,
where $P_{k} (t)$ represents the probability of users requesting the $k$-th service at time $t$, $n_{i}$ is the number of users covered by the $i$-th edge server. The amount of service data received by the $i$-th edge server in the time slot is thus represented as follows.
\begin{equation}
    A_{i}(t)=\sum_{k=1}^{K} n_{i}^{k}(t) b_{k}
\label{eq10}
\end{equation}
Furthermore, we define $A_{i}^{max}$ as the maximum value received by the $i$-th edge server, and we have $A_{i}(t)\le A_{i}^{max}$.

\subsection{Problem Analysis}
\subsubsection{Buffer queue model of edge servers}

Note that each edge server only deploys part of the services, and it is impossible to finish all requests within one time slot. It is necessary to consider the buffer queue model to ensure the stability of the edge offloading system. The buffer queue model is shown in Figure 2, which is maintained for each edge server to store the unfinished service data in the last time slot.
\begin{figure}[t]
\includegraphics[width=3.5in,height=2in]{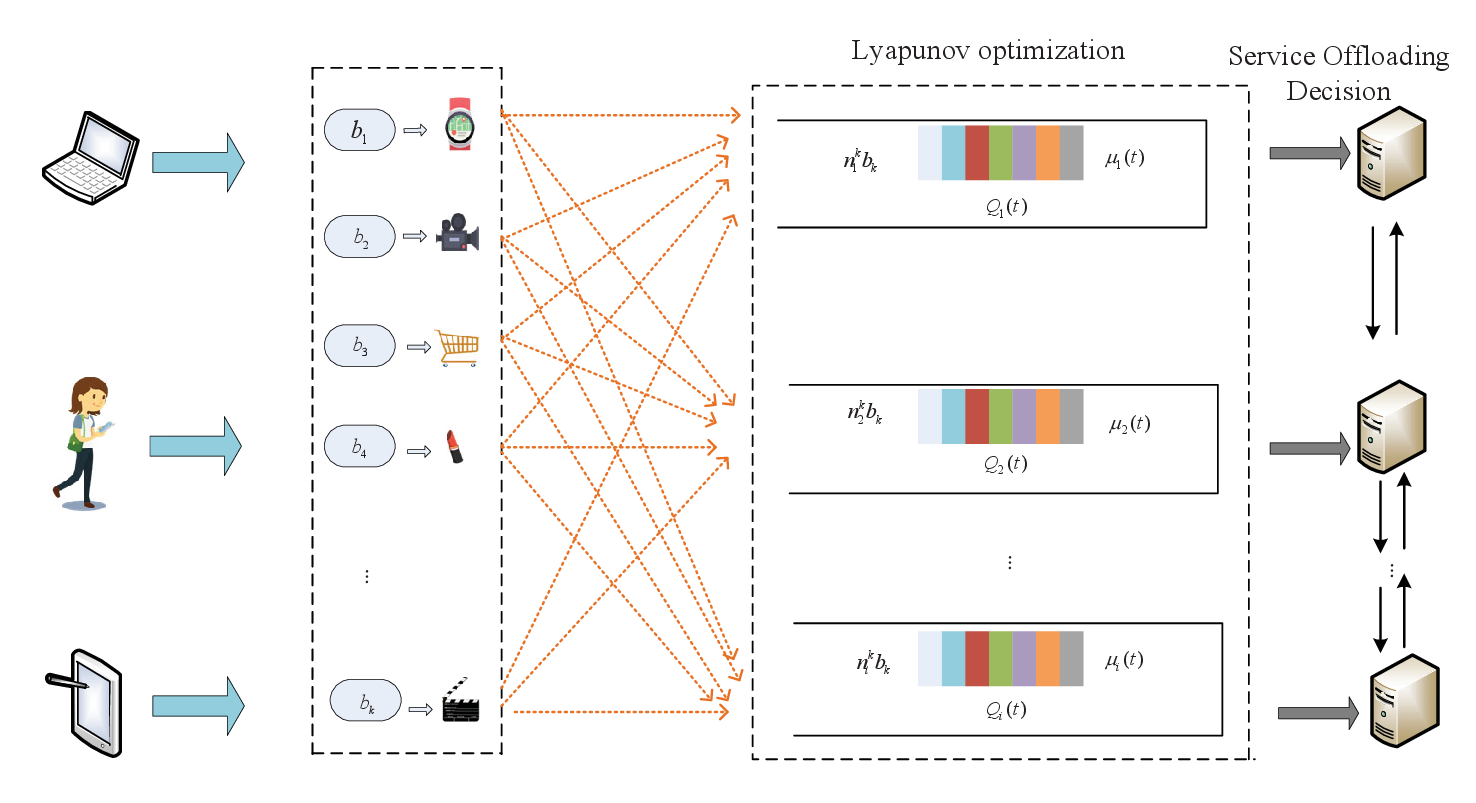}
\caption{Edge server buffer queue model.}	
\vspace{0.4cm}	
\end{figure}
	
Let $Q_{i}^k(t)$ denote the amount of data from the $k$-th service in the buffer of the $i$-th edge server. The buffer occupation from all kinds of services can be presented as follows.
\begin{equation}
    Q_{i}(t)=\sum_{k=1}^{K} Q_{i}^k(t)
\label{eq11}
\end{equation}
	
The amount of buffered data in two adjacent time slots is 
\begin{equation}
    Q_{i}(t+1)=\max \left\{Q_{i}(t)-\mu_{i}(t), 0\right\}+A_{i}(t)
\label{eq12}
\end{equation}
where $\mu_{i}(t)$ represents the processing rate of the $i$-th edge server, and Equation~(\ref{eq12}) states that the service data volume offloaded in time slot $t$ will be stored in the queue and processed in the next time slot. If the amount of data processing in a time slot is less than the amount of data received.
Let $\mu_{i}^{max}$ represent the maximum workload of the edge server. 
At each time slot, the amount of processed and completed data is less than the maximum workload, \emph{i.e.,} $\mu_{i}(t)\le \mu_{i}^{max}$.
	

Specifically, the buffer queue length must be carefully managed to avoid unbounded growth, which would otherwise lead to excessive delays in data processing. 
To address this, the average buffer queue length should be constrained within acceptable limits. 
This implies that the queue length at time $t$ must remain below a predefined upper threshold, \emph{i.e.,} $Q_{i}(t)\le Q_{i}^{max}$.

\subsubsection{Cost Model of Edge Offloading System}
\paragraph{Energy consumption model}
System energy consumption is generated by data transmission in communication and data processing in edge server. In this study, the energy consumption in data packet forwarding and receiving is considered as part of the transmission energy consumption. Define $E^{c}(t)$ as energy consumption of communication and $E^{p}(t)$ as energy consumption of data processing, the total energy consumption $E_{total}(t)$ can be represented as follows.
\begin{equation}
    E_{total}(t)=E^{c}(t)+E^{p}(t)
\label{eq13}
\end{equation}
	
The energy consumption generated within each phase can be named as device-server and server-server energy consumption, respectively. Define the device-server energy consumption as $E^{c1}(t)$ and the server-server energy consumption as $E^{c2}(t)$, the total energy consumption of data transmission is
\begin{equation}
    E^{c}(t)=E^{c1}(t)+E^{c2}(t).
\label{eq14}
\end{equation}
	
The energy consumption of data transmission depends on the transmission power and the amount of transmitted data.
Let $e_{u}$ represent the unit energy consumption generated by users, then the device-server energy consumption $E^{c1}(t)$ is
\begin{equation}
    E^{c1}(t)=\sum_{i=1}^{M} \sum_{k=1}^{K}n_{i}^{k}(t)b_{k}e_{u}
\label{eq15}
\end{equation}
Equation~(\ref{eq15}) shows that the energy consumption has a linear relationship with the amount of data received, that is, the energy consumption will grow accordingly as the workload increases.
	
On the other hand, assume that the number of edge servers in the collaboration area is represented as $O_{k} (t)$. When the service data is transmitted through multiple hops, the maximum hops can be written as $H_{k} (t)=O_{k} (t)-1$. Thus, the energy consumption for receiving and forwarding data through collaboration among servers is
\begin{equation}
    E^{c2}(t) = \sum\limits_{i = 1}^M {\sum\limits_{k = 1}^K {{n_{i}^{k}}}} (t){b_k}e_s H_{k} (t)\left({1 - x_i^k(t)} \right).
\label{eq16}
\end{equation}
	
Assume that there are K types of services, and each SBS can only cache part of the types of services. When an SBS receives a request, the probability for the SBS to process is $P_{c} ={\bar A_{i}}/K$, where ${\bar A_{i}}$ denotes the number of service types cached in the $i$-th SBS. When the service request is forwarded to other cooperating SBSs, the probability that the service is responded to by other SBSs is $P_{nc} =\left (K-{\bar A_{i}}\right )/K$.

\begin{figure*}[th]
\centering{
    \includegraphics[width=0.7\textwidth]{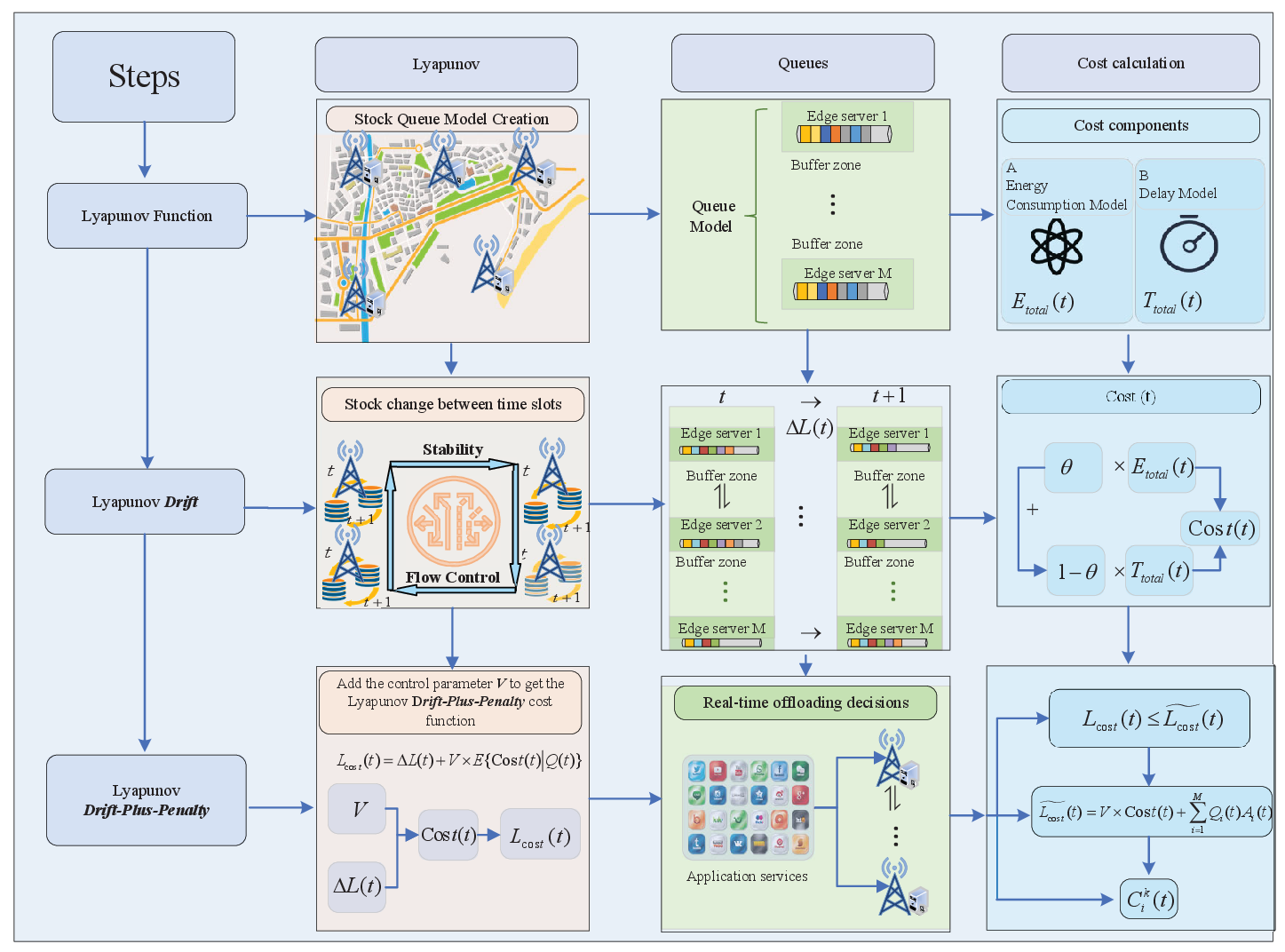}}
\caption{LDSO algorithm framework}	
\end{figure*}	
	
Let $e_{p}$ denote the processing power per bit of data of the edge server. The energy consumption of data process $E^{p}(t)$ is
\begin{equation}
    {E^p}(t) = \sum\limits_{i = 1}^M {\sum\limits_{k = 1}^K {{e_p}{A_i}(t)\left( {{P_c}x_i^k(t) + {P_{nc}}\left( {1 - x_i^k(t)} \right){H_k}(t)} \right)}}.
\label{eq17}
\end{equation}

\paragraph{Delay model}
Delay in the service offloading process includes communication delay $T^{c}(t)$ and computation delay $T^{P}(t)$. Therefore, the total delay is shown as
\begin{equation}
    T_{total}(t)=T^{c}(t)+T^{p}(t).
\label{eq18}
\end{equation}

The communication delay from the user device to the $i$-th edge server is related to the amount of offloaded data $n_{i}^{k}(t)b_{k}$, that is, the product of the number of requesting users for the $k$-th type service and the size of the service. Assume that the communication bandwidth is $B_{w}$, the device transmit power is $P_{u}$ and the channel gain is expressed as $h(t)$. Let $r(t)$ be the channel transmission rate, the communication delay can be expressed as:
\begin{equation}
    T^c(t)=\sum_{i=1}^M \sum_{k=1}^K \frac{n_{i}^{k}(t) b_k}{r(t)}
\label{eq19}
\end{equation}
where $r(t)=B_w \cdot \log _2\left(1+\frac{P_{u}h(t)}{\varphi^2}\right)$.
	
To describe the computation delay of the service, each SBS that contributes to the service data offloading of the $k$-th type service is regarded as a M/M/1 queue. According to queuing theory, the arrival flow can be modeled as Poisson distribution. Let $\lambda_{u}^{k}$ denote the arrival intensity of the $k$-th service from the $u$-th user. The average wait time can be represented as
\begin{equation}
    t_k=\frac{1}{\mu_{i}(t)-\sum_{u=1}^{n_{i}^{k}(t)} \lambda_{u}^{k} b_k}.
\label{eq20}
\end{equation}
	
As described in the service offloading method, there are two cases: single-hop and multi-hop service offloading. Therefore,  the computation delay can be presented as follows.
\begin{equation}
    T^p(t)=\sum_{i=1}^M \sum_{k=1}^K P_{c}t_{k}x_{i}^k(t)+H_k(t)P_{nc}t_{k}\left(1-x_{i}^k(t)\right)
\label{eq21}
\end{equation}

\section{LDSO: Dynamic Service Offloading Algorithm Based on Lyapunov}
	
To solve the cost minimization problem represented by Equation~(\ref{eq1}), the Lyapunov function cluster is employed to control the edge offloading system. And then, the time-dependent long-term optimization problem is transformed into an optimization problem just requiring current time slot information. Furthermore, a service matching algorithm based on greedy strategy is proposed to solve this optimization problem. Figure 3 shows the framework of LDSO (\emph{D}ynamic \emph{S}ervice \emph{O}ffloading based on \emph{L}yapunov). 

\subsection{Framework of Lyapunov Optimization Problem}
	
We discuss the quantitative representation of the system cost with the following definitions and lemmas in this section.

\begin{definition}[Quadratic Lyapunov Function]
    We have the following expression to denote the quadratic Lyapunov function.
\begin{equation}
    L(t) \stackrel{\text { def }}{=} \frac{1}{2} \sum_{i=1}^M\left(Q_{i}(t)\right)^2
\label{eq22}
\end{equation}
where $L(t)$ represents the buffer queue of the $i$-th edge server in time slot $t$, which is a scalar measure of queue congestion in the network. 
The smaller $L(t)$ is, the less the buffered data. 
It is obvious that $L(t)$ is equal to 0 if and only if all queues are empty, \emph{i.e.,} $Q_{i}(t)=0, \forall i \in[1, M]$. Otherwise, $L(t)$ is always greater than zero.
\end{definition}

\begin{definition}[Lyapunov Drift Function]
     The Lyapunov drift function is expressed as follows:
\begin{align}
    \Delta L(t) & \stackrel{\Delta} {=} E\{L(t+1)-L(t) \mid Q(t)\} \\
    & \stackrel{\Delta} {=} \frac{1}{2} \sum_{i=1}^M\left(Q_{i}(t+1)\right)^2-\frac{1}{2} \sum_{i=1}^M\left(Q_{i}(t)\right)^2
    \label{eq23}
\end{align} 
where $\Delta L(t)$ represents the change of buffered data in the queue. Based on the two definitions, we have the following Lemma 1.
\end{definition}
    
\begin{lemma}
    The upper bound on the variation of the buffered data in the edge server can be expressed as
\begin{equation}
    \Delta L(t) \leq B+E\left\{\sum_{i=1}^M Q_{i}(t)\left(A_{i}(t)-\mu_{i}(t)\right)\right\}.
\label{eq24}
\end{equation}
where
\begin{equation}
    B=\sum_{i=1}^M \frac{\left(\mu_{i}^{\max }\right)^2+\left(A_{i}^{\max }\right)^2}{2}
\label{eq25}
\end{equation}
\end{lemma}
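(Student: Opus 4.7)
The plan is to derive the drift bound by squaring the queue update equation~(\ref{eq12}) term-by-term, exploiting that the $\max\{\cdot,0\}$ operator shrinks magnitudes, and then invoking the crude bounds $\mu_i(t)\le\mu_i^{\max}$ and $A_i(t)\le A_i^{\max}$ to replace the instantaneous quantities with the constants that appear in $B$.

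First I would fix an index $i$ and let $q_i'(t)=\max\{Q_i(t)-\mu_i(t),0\}$, so that Equation~(\ref{eq12}) reads $Q_i(t+1)=q_i'(t)+A_i(t)$. Squaring gives
\begin{equation*}
Q_i(t+1)^2 \;=\; q_i'(t)^2 \;+\; A_i(t)^2 \;+\; 2\,q_i'(t)\,A_i(t).
\end{equation*}
The key observation is twofold: $q_i'(t)^2\le (Q_i(t)-\mu_i(t))^2$ (the case $Q_i(t)<\mu_i(t)$ makes the left side $0$ while the right side is nonnegative), and $q_i'(t)\le Q_i(t)$. Substituting these and expanding $(Q_i(t)-\mu_i(t))^2$ yields
\begin{equation*}
Q_i(t+1)^2 - Q_i(t)^2 \;\le\; \mu_i(t)^2 + A_i(t)^2 + 2\,Q_i(t)\bigl(A_i(t)-\mu_i(t)\bigr).
\end{equation*}

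Next I would divide by $2$, sum over $i=1,\dots,M$, and recognize the left-hand side as the pathwise version of $\Delta L(t)$ before conditioning (cf.\ Equation~(\ref{eq23})). Applying the bounds $\mu_i(t)\le\mu_i^{\max}$ (constraint~(\ref{eq5})) and $A_i(t)\le A_i^{\max}$ (the bound stated just after Equation~(\ref{eq10})) to the first two terms produces exactly the constant $B$ of Equation~(\ref{eq25}). Finally, taking conditional expectation given $Q(t)=(Q_1(t),\dots,Q_M(t))$, noting that $Q_i(t)$ is measurable with respect to this conditioning and $B$ is deterministic, delivers Equation~(\ref{eq24}).

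The only subtle step is the squaring inequality for the $\max\{\cdot,0\}$ operator; everything else is bookkeeping. I expect no serious obstacle, provided the case split $Q_i(t)\ge\mu_i(t)$ vs.\ $Q_i(t)<\mu_i(t)$ is handled explicitly so the reader sees why dropping the $\max$ is legitimate in both the quadratic and the linear term.
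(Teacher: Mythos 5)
Your proposal is correct and follows essentially the same route as the paper's own proof: square the queue update, use $\max\{Q_i(t)-\mu_i(t),0\}^2\le(Q_i(t)-\mu_i(t))^2$ and $\max\{Q_i(t)-\mu_i(t),0\}\le Q_i(t)$ to drop the $\max$, then bound $\mu_i(t)^2$ and $A_i(t)^2$ by their maxima to obtain $B$ and take expectations. Your explicit case split for the $\max$ operator is a slightly cleaner justification of the squaring inequality than the paper gives, but the argument is the same.
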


\begin{proof}
    In each time slot, the amount of data offloaded by the edge server is always greater than or equal to zero, \emph{i.e.,}
\begin{equation}
    \max \left\{Q_{i}(t)-\mu_{i}(t), 0\right\} \leq Q_{i}(t)
\label{eq26}
\end{equation}
\begin{equation}
    \left(\max \left\{Q_{i}(t)-\mu_{i}(t), 0\right\}\right)^2 \leq\left(Q_{i}(t)-\mu_{i}(t)\right)^2
\label{eq27}
\end{equation}
Then Equation~(\ref{eq12}) can be rewrite as:
\begin{equation}
    \begin{aligned}
        \left(Q_{i}(t+1)\right)^2= & \left(\max \left\{Q_{i}(t)-\mu_{i}(t), 0\right\}\right)^2+\left(A_{i}(t)\right)^2 \\
	& +2 \max \left\{Q_{i}(t)-\mu_{i}(t), 0\right\}A_{i}(t) \\
	\leq & \left(Q_{i}(t)-\mu_{i}(t)\right)^2+\left(A_{i}(t)\right)^2 \\
	& +2 Q_{i}(t)A_{i}(t) \\
	= & \left(Q_{i}(t)\right)^2+\left(\mu_{i}(t)\right)^2+\left(A_{i}(t)\right)^2 \\
	& -2 Q_{i}(t)\left(\mu_{i}(t)-A_{i}(t)\right)
    \end{aligned}
\label{eq28}
\end{equation}
The Lyapunov drift function at time slot $t$ is then changed as follows.
\begin{equation}
    \begin{aligned}
        \Delta L(t)= & E\left\{\sum_{i=1}^M \frac{1}{2}\left(\left(Q_{i}(t+1)\right)^2-\left(Q_{i}(t)\right)^2\right)\right\} \\
	\leq & \sum_{i=1}^M\frac{\left(\mu_{i}^{\max}\right)^2+\left(A_{i}^{\max }\right)^2}{2} \\
        & +E\left\{\sum_{i=1}^M \left\{Q_{i}(t)\left(A_{i}(t)-\mu_{i}(t)\right)\right\} \right\}
    \end{aligned}
\label{eq29}
\end{equation}
\end{proof}
	
	
	
	
	

\begin{definition}[Lyapunov Drift-Plus-Penalty Function]
    \begin{equation}
    L_{\operatorname{cost}}(t)=\Delta L(t)+V \times E\{\operatorname{Cost}(t) \mid Q(t)\}
\label{eq30}
\end{equation}
where $V$ is a non-negative control parameter and $\operatorname{Cost}(t)$ is the system cost defined in Equation~(\ref{eq1}). By dynamically adjusting the control parameters $V$, the balance between system cost and buffered data can be achieved.
\end{definition}
	
The objective function $\operatorname{Cost}(t)$ is minimized based on stability. Therefore, $L_{\operatorname{cost}}(t)$ represents the offloading cost of the optimized time slot system. $\bigtriangleup L(t)$, as known from Lemma 1, measures the stability of the system. We have the following Lemma 2.

\begin{lemma}
    The upper bound on the system cost $L_{\operatorname{cost}}(t)$ is 
\begin{equation}
    L_{\operatorname{cost}}(t) \leq \widetilde{{L_{\operatorname{cost}}} }(t),
\label{eq31}
\end{equation}
where $\widetilde{{L_{\operatorname{cost}}}}(t)=B+V \cdot \operatorname{Cost}(t)+\sum_{i=1}^M Q_{i}(t)A_{i}(t)$ and denotes the upper bound of $L_{cost} (t)$ with the Lyapunov penalty.
\end{lemma}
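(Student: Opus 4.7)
The plan is to chain together the definition of $L_{\operatorname{cost}}(t)$ with the drift bound already established in Lemma 1, and then drop a single non-positive term to land on the claimed expression. Since no fresh queue analysis is needed, the argument is essentially a substitution followed by an inequality.

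First I would unfold the definition $L_{\operatorname{cost}}(t)=\Delta L(t)+V\cdot E\{\operatorname{Cost}(t)\mid Q(t)\}$ and immediately substitute the drift bound from Lemma 1, namely $\Delta L(t)\le B+E\{\sum_{i=1}^{M}Q_{i}(t)(A_{i}(t)-\mu_{i}(t))\}$. This yields
\begin{equation*}
L_{\operatorname{cost}}(t)\le B+V\cdot E\{\operatorname{Cost}(t)\mid Q(t)\}+E\!\left\{\sum_{i=1}^{M}Q_{i}(t)A_{i}(t)-\sum_{i=1}^{M}Q_{i}(t)\mu_{i}(t)\,\Big|\,Q(t)\right\}.
\end{equation*}
Because $Q(t)$ is the conditioning information and $A_{i}(t)$, $\mu_{i}(t)$, $\operatorname{Cost}(t)$ are treated as the decision variables chosen at time $t$, I would then drop the conditional expectation notation in favor of the per-slot instantaneous quantities, mirroring the convention already adopted in the statement of the lemma.

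Next I would observe that $Q_{i}(t)\ge 0$ by construction of the buffer queue and $\mu_{i}(t)\ge 0$ since the processing rate is a non-negative quantity bounded above by $\mu_{i}^{\max}$. Consequently the term $-\sum_{i=1}^{M}Q_{i}(t)\mu_{i}(t)$ is non-positive, and discarding it preserves the direction of the inequality. This gives
\begin{equation*}
L_{\operatorname{cost}}(t)\le B+V\cdot\operatorname{Cost}(t)+\sum_{i=1}^{M}Q_{i}(t)A_{i}(t)=\widetilde{L_{\operatorname{cost}}}(t),
\end{equation*}
which is exactly the desired bound.

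The argument is essentially routine once Lemma 1 is in hand, so there is no genuine obstacle to overcome; the only subtlety worth flagging is the justification for discarding the $-\sum_{i}Q_{i}(t)\mu_{i}(t)$ term, which hinges on the non-negativity of both factors. I would make that sign argument explicit in the write-up to avoid the impression that a bound has been weakened without justification, and I would also be careful to keep the conditioning on $Q(t)$ consistent across the chain of inequalities so that the final expression genuinely dominates the conditional drift-plus-penalty quantity rather than an unconditional version of it.
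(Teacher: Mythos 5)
Your proposal is correct and follows essentially the same route as the paper's own proof: substitute the drift bound from Lemma 1 into the drift-plus-penalty definition, then discard the non-positive term $-\sum_{i=1}^{M}Q_{i}(t)\mu_{i}(t)$. Your explicit justification of the sign of that discarded term (non-negativity of $Q_{i}(t)$ and $\mu_{i}(t)$) is a small improvement in rigor over the paper, which drops the term without comment.
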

	
\begin{proof}
    Substitute the result of Lemma 1 into Equation~(\ref{eq29}), the Drift-Plus-Penalty function $L_{\operatorname{cost}}(t)$ is scaled as follows.
\begin{equation}
    \begin{aligned}
        & L_{\operatorname{cost}}(t)=\Delta L(t)+V \cdot E\{\operatorname{Cost} (t) \mid Q(t)\} \\
	& \leq B+E\left\{\sum_{I=1}^M Q_{i}(t)\left(A_{i}(t)-\mu_{i}(t)\right)\right\} \\
	& +V \cdot E\{\operatorname{Cost}(t) \mid Q(t)\} \\
	& =B+V \cdot \operatorname{Cost}(t)+E\left\{\sum_{i=1}^M\left(Q_{i}(t) A_{i}(t)\right)\right\} \\
	& \quad-E\left\{\sum_{i=1}^M Q_{i}(t) \mu_{i}(t)\right\} \\
	& \leq B+V \cdot \operatorname{Cost}(t)+E\left\{\sum_{i=1}^M\left(Q_{i}(t) A_{i}(t)\right)\right\}
    \end{aligned}
\label{eq32}
\end{equation}
\end{proof}	

According to the framework of opportunistic minimization of conditional expectations [13], the real-time service offloading optimization problem is solved by minimizing the solution of the system cost. Let $C_{i}^{k}(t)$ represent the cost of the $i$-th edge server processing the $k$-th service with the Lyapunov Drift-Plus-Penalty function. We give the following theorem.

\begin{theorem}
    The real-time offloading cost of the $i$-th edge server can be expressed as follows.
\begin{equation}
    C_{i}^{k}(t)=V\theta E_{i}^{k}(t)+V\left(1-\theta\right)T_{i}^{k}(t)+Q_{i}^k(t)A_{i}^{k}(t)
\label{eq33}
\end{equation}
The two terms $E_{i}^{k}(t)$ and $T_{i}^{k}(t)$ denote the energy consumption and delay for the $k$-th service, respectively.
\end{theorem}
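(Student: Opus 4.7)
The plan is to start from the Drift-Plus-Penalty upper bound in Lemma~2 and show that, when we perform opportunistic minimization at each time slot, the quantity to be minimized decouples into a sum of per-server, per-service contributions of the form stated in Equation~(\ref{eq33}). Since $B$ in $\widetilde{L_{\operatorname{cost}}}(t)$ is a constant with respect to the offloading decisions $x_i^k(t)$, I would first drop it and focus on minimizing
\begin{equation*}
V\cdot\operatorname{Cost}(t)+\sum_{i=1}^{M}Q_{i}(t)A_{i}(t).
\end{equation*}

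Next, I would expand each piece into a double sum over servers $i$ and service types $k$. By Equation~(\ref{eq8}) we have $V\cdot\operatorname{Cost}(t)=V\theta E_{total}(t)+V(1-\theta)T_{total}(t)$, and Equations~(\ref{eq13})--(\ref{eq17}) and (\ref{eq18})--(\ref{eq21}) already express $E_{total}(t)$ and $T_{total}(t)$ as double sums $\sum_{i=1}^{M}\sum_{k=1}^{K}$. I would identify the inner summands as $E_i^k(t)$ and $T_i^k(t)$, collecting the device-server, server-server, and processing contributions for each $(i,k)$ pair. For the queue-arrival term, I would use $Q_i(t)=\sum_k Q_i^k(t)$ from Equation~(\ref{eq11}) and $A_i(t)=\sum_k A_i^k(t)$ with $A_i^k(t)=n_i^k(t)b_k$ from Equation~(\ref{eq10}). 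Then, regrouping yields
\begin{equation*}
\widetilde{L_{\operatorname{cost}}}(t)-B=\sum_{i=1}^{M}\sum_{k=1}^{K}\bigl[V\theta E_i^k(t)+V(1-\theta)T_i^k(t)+Q_i^k(t)A_i^k(t)\bigr],
\end{equation*}
so each summand is exactly $C_i^k(t)$ as in Equation~(\ref{eq33}), and the opportunistic minimization reduces to choosing $x_i^k(t)$ greedily per $(i,k)$.

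The main obstacle will be justifying the decomposition of the cross term $\sum_i Q_i(t)A_i(t)$ into the diagonal form $\sum_i\sum_k Q_i^k(t)A_i^k(t)$, because the product $\bigl(\sum_k Q_i^k\bigr)\bigl(\sum_k A_i^k\bigr)$ contains off-diagonal pieces $Q_i^k(t)A_i^{k'}(t)$ with $k\neq k'$ that do not appear in the claim. I would handle this by maintaining a separate sub-queue $Q_i^k(t)$ for every service type (so that Lemma~1 applies verbatim per $(i,k)$ pair with drift bound $\tfrac{1}{2}((\mu_i^{k,\max})^2+(A_i^{k,\max})^2)+Q_i^k(t)(A_i^k(t)-\mu_i^k(t))$) and by invoking the 0--1 constraint~(\ref{eq7}): the decision for $(i,k)$ only couples to its own sub-queue, so the off-diagonal cross terms are constants with respect to $x_i^k(t)$ and can be discarded in the opportunistic per-slot minimization, leaving exactly the expression in the theorem.
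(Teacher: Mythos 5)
Your proposal is correct and follows essentially the same route as the paper's own (very terse) proof: drop the decision-independent constant $B$ from the bound of Lemma~2, expand $V\cdot\operatorname{Cost}(t)=V\theta E_{total}(t)+V(1-\theta)T_{total}(t)$ using the double-sum forms of Equations~(\ref{eq15})--(\ref{eq17}) and (\ref{eq19})--(\ref{eq21}), and regroup the summands per $(i,k)$ pair. Where you go beyond the paper is in the treatment of the cross term: the paper's proof silently rewrites $\sum_{i}Q_{i}(t)A_{i}(t)$ as $\sum_{i}\sum_{k}Q_{i}^{k}(t)A_{i}^{k}(t)$, which is not an identity, since $\bigl(\sum_{k}Q_{i}^{k}\bigr)\bigl(\sum_{k'}A_{i}^{k'}\bigr)$ carries the off-diagonal products $Q_{i}^{k}(t)A_{i}^{k'}(t)$, $k\neq k'$, that you correctly flag. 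Your two remedies are both legitimate and complementary: observing that the off-diagonal terms (indeed, all $Q$--$A$ products, since $A_{i}^{k}(t)=n_{i}^{k}(t)b_{k}$ is exogenous under Equation~(\ref{eq10})) are constants with respect to $x_{i}^{k}(t)$ justifies discarding them in the opportunistic per-slot minimization; alternatively, running the drift argument on per-service sub-queues yields the diagonal form directly, but note that this implicitly replaces the Lyapunov function of Definition~1 by $\tfrac{1}{2}\sum_{i}\sum_{k}\bigl(Q_{i}^{k}(t)\bigr)^{2}$ and correspondingly changes the constant $B$ of Lemma~1, a modification the paper never makes explicit. In short, your argument reaches the same decomposition but actually closes a gap the published proof leaves open.
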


\begin{proof}
    Recall that $V \times \operatorname{Cost}(t)=V \theta E_{total}(t)+(V-V \theta) T_{total }(t)$ and the presentation of $\widetilde{{L_{\operatorname{cost}}}  } (t)$. We have
\begin{equation}
    \begin{gathered}
    L_{\operatorname{cost}}(t)=  V \times \operatorname{Cost}(t)+\sum_{i=1}^M\sum_{k=1}^K Q_{i}^k(t)A_{i}^{k}(t) \\
    =\sum_{i=1}^M\sum_{k=1}^K V\theta E_{i}^{k}(t)+V\left(1-\theta\right)T_{i}^{k}(t)+Q_{i}^k(t)A_{i}^{k}(t).
    \end{gathered}
\label{eq37}
\end{equation}
\end{proof}

Both the energy consumption and delay models include the binary variable $x_{{i}}^{k}(t)$. The key step is to determine the offloading point, \emph{i.e.,} where the value of $x_{{i}}^{k}(t)$ is equal to 1. By introducing Lyapunov optimization into the LDSO scheme, the dynamic service offloading optimization problem can be transformed into the matching problem of optimal parameters, which reduces the complexity of the LDSO scheme. Let $X(t)$ denote the matching matrix of $x_{i}^{k}(t)$. We present and discuss the implementation of LDSO in Algorithm 1.
	
\subsection{Matching Decision based on Greedy Strategy}
This study adopts a greedy idea to approximately obtain the solution of $x_{{i}}^{k}(t)$. 
The number of user requests, the service ratio, and the transmission rate are calculated in lines 4-10.  
Let $C_{i^{*} }^{k^{*}}(t)$ denote the smallest one of the set $\left\{C_{i}^{k}(t)\right\}_{M \times K}$. 
If the queue length is satisfied and the node caches the $k$-th service, the matching parameter $x_{{i}}^{k}(t)$ is set to 1 and $C_{i^{*} }^{k^{*}}(t)$ is removed from the set $\left\{C_{i}^{k}(t)\right\}_{M \times K}$ (lines 12-16). 
Then other elements excluding $i$ will be reset (lines 18-19). 
Finally, update the system cost and the buffer queue status (lines 25-26).

\makeatletter
\def\algocf@startloop{\begin{ALC@loop}}
\def\algocf@endloop{\end{ALC@loop}}
\makeatother

\begin{algorithm}[h!]
\SetAlgoLined 
\SetNoFillComment 
\DontPrintSemicolon 

\caption{LDSO.}\label{alg1}
\KwIn{control parameter $V$, weight factor $\theta $, $b_k$ and maximum queue length $Q_{i}^{\max}$}
\KwOut{matching parameter $x_{i}^{k}(t)$}

\BlankLine
$Q_{i}(t)=0,\ \mu_{i}(t)=0,\ A_{i}(t)=0;$\;

\For{$t=0$ \KwTo $T-1$}{
    $X(t) \leftarrow  {[-1]}_{M\times K}$ \tcp*[r]{Offloading Decision Matrix}
    \For{$k \in[1, K]$}{	
        Getting $n_{i}^{k}(t)$\;
    }
    \For{$i \in[1, M]$}{
        Getting $\mu_{i}(t)$ \tcp*[r]{service ratio}\;
    }
    Obtaining $r(t)$ based on $h(t)$\;
    
    \While{$\left\{C_{i}^{k}(t)\right\}_{M \times K} \neq \emptyset$}{
        $C_{i^{*} }^{k^{*}}(t) \leftarrow \min \{C_{i}^{k}(t)_{M\times K}\}$ \tcp*[r]{System cost function}\;
        
        \If {$(Q_{i}(t)+A_{i}^{k}(t) \le Q_{i}^{\max}) \land (X_{i^{*} }^{k^{*}}(t) \neq -1)$}{
            $n_{i}^{k}(t)=n_{i}^{k}(t-1)$\;
            $C_{i}^{k}(t)_{M\times K} \leftarrow C_{i}^{k}(t)_{M\times K}-C_{i^{*} }^{k^{*}}(t)$\;
            $X_{i^{*} }^{k^{*}}(t) \leftarrow 1$\;
            
            \For{$i \neq i^{*} \land i \in I$}{
                $X_{i^{*} }^{k^{*}}(t) \leftarrow 0$\;
                $C_{i}^{k}(t)_{M\times K} \leftarrow C_{i}^{k}(t)_{M\times K}-C_{i}^{k^{*}}(t)$\;
            }
        }
        \Else{
            $X_{i^{*} }^{k^{*}}(t) \leftarrow 0$\;
        }
    }
    
    $\operatorname{Cost}(t)=\theta \cdot E_{\text{total}}(t)+(1-\theta )T_{\text{total}}(t)$\;
    $Q_{i}(t+1) \leftarrow \max \{Q_{i}(t)-\mu_{i}(t)\}+A_{i}(t)$\;
}

\end{algorithm}

According to the convergence theory of Lyapunov optimization~\citep{r13} and the supporting hyperplane theorem~\citep{r52}, the following convergence of LDSO algorithm can be obtained. Define $\varepsilon=1 / V(\varepsilon>0)$, and for $T \geq 1 / \varepsilon^2$, we have the following theorem.

\begin{theorem}
    LDSO provides an $O(\varepsilon)$ approximation with convergence time $O\left(1 / \varepsilon^2\right)$ if the following two conditions are satisfied.
\begin{equation}
    \begin{gathered}
    \bar{C}(t) \leq C^*+O(\varepsilon) \\
    \overline{A_{i}}(t) \leq \mu+O(\varepsilon) \quad
    \end{gathered}
\label{eq34}
\end{equation}
where $C^*$ is the approximate optimal solution. 
\end{theorem}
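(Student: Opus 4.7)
The plan is to invoke the standard Lyapunov drift-plus-penalty argument, using Lemma 2 as the per-slot upper bound and the greedy minimization performed by LDSO as the mechanism that makes the bound competitive against any feasible stationary policy. Concretely, let $\pi^{\ast}$ denote an $\omega$-only stationary randomized benchmark policy that achieves the optimal long-term cost $C^{\ast}$ while respecting the stability constraint (\ref{eq2}); the existence of such a policy in the achievable region is exactly what the supporting hyperplane theorem cited after the theorem statement supplies. Since at each time slot LDSO picks the matching matrix $X(t)$ that minimizes $\widetilde{L_{\operatorname{cost}}}(t)$ over all admissible decisions, its per-slot drift-plus-penalty is bounded above by the value that $\pi^{\ast}$ would attain, which by Lemma 2 and the feasibility of $\pi^{\ast}$ satisfies
\begin{equation*}
\Delta L(t) + V \cdot E\{\operatorname{Cost}(t)\mid Q(t)\} \;\le\; B + V\cdot C^{\ast}.
\end{equation*}

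The next step is to take total expectations, telescope the inequality over $t = 0,1,\dots,T-1$, and use $E\{L(0)\}=0$ together with $L(T)\ge 0$. After dividing through by $VT$ this yields
\begin{equation*}
\frac{1}{T}\sum_{t=0}^{T-1} E\{\operatorname{Cost}(t)\} \;\le\; C^{\ast} + \frac{B}{V} + \frac{E\{L(0)\}}{VT}.
\end{equation*}
Substituting $\varepsilon = 1/V$ and restricting to $T \ge 1/\varepsilon^{2}$ makes the initial-condition term $O(\varepsilon^{2})$, so the right-hand side is $C^{\ast} + B\varepsilon + O(\varepsilon^{2}) = C^{\ast}+O(\varepsilon)$, which is the cost conclusion $\bar{C}(t)\le C^{\ast}+O(\varepsilon)$.

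For the queue-stability conclusion, I would rearrange the same telescoped inequality to isolate $\frac{1}{T}\sum_{t=0}^{T-1}\sum_{i=1}^{M} E\{Q_{i}(t)(\mu_{i}(t)-A_{i}(t))\}$ and bound it by a quantity of order $B + V\,\Delta C$, where $\Delta C$ is the gap between $\operatorname{Cost}(t)$ and $C^{\ast}$. Appealing to the slackness that $\pi^{\ast}$ provides on constraint (\ref{eq2}) — there exists a strictly positive margin $\eta$ such that the expected arrivals fall short of the expected service by at least $\eta$ — this inequality converts into $\frac{1}{T}\sum_{t=0}^{T-1}\sum_{i} E\{Q_{i}(t)\}\le (B+V\,\Delta C)/\eta$, and hence $\overline{A_{i}}(t)\le \mu_{i}+O(\varepsilon)$ for each edge server once $\varepsilon=1/V$ is substituted. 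Combining with the finite-$T$ threshold $T\ge 1/\varepsilon^{2}$ established above produces the $O(1/\varepsilon^{2})$ convergence time.

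The main obstacle, and the only step that is more than bookkeeping, is the first one: constructing the stationary randomized benchmark $\pi^{\ast}$ and certifying its slackness margin $\eta$. This requires the achievable cost–service region to be closed and convex so that $C^{\ast}$ is actually attained (rather than merely approached) by some $\omega$-only policy, which is the role played by the supporting hyperplane theorem cited in~\cite{r52}. Once $\pi^{\ast}$ and $\eta$ are in hand, the telescoping and the substitution $\varepsilon=1/V$ are routine, and both claimed bounds follow simultaneously from the single inequality produced in the first step.
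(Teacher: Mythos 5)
Your first half --- the cost bound --- follows the same route as the paper: compare the drift-plus-penalty of LDSO's per-slot minimization against a benchmark achieving $C^*$, telescope Lemma~2 over $t=0,\dots,T-1$, divide by $VT$, and read off $\bar{C}(t)\le C^*+B\varepsilon+O(\varepsilon^2)$. That part is fine and is essentially the paper's Equations~(\ref{eq47})--(\ref{eq60}).

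The second half is where you diverge, and where there is a genuine gap. You propose to establish $\overline{A_i}(t)\le\mu+O(\varepsilon)$ by bounding the \emph{time-averaged} queue backlog $\frac{1}{T}\sum_{t}\sum_i E\{Q_i(t)\}$ via a Slater-type slackness margin $\eta$, obtaining a bound of order $(B+V\Delta C)/\eta=O(V)$, and then asserting the arrival-rate condition "hence" follows. It does not follow routinely. The sample-path inequality $\sum_{t=0}^{T-1}A_i(t)\le Q_i(T)-Q_i(0)+\sum_{t=0}^{T-1}\mu_i(t)$ (the paper's Equations~(\ref{eq42})--(\ref{eq45})) shows that what you actually need is a bound on the \emph{terminal} queue $E\{Q_i(T)\}/T$, not on the time average. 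A time-average bound of $O(V)$ does not control $E\{Q_i(T)\}$ at the specific horizon $T$; and the obvious terminal bound obtained from $E\{L(T)\}\le T(B+V\Delta C)$ gives only $\|E\{Q(T)\}\|\le\sqrt{2T(B+V\Delta C)}$, i.e.\ $E\{Q_i(T)\}/T=O(\sqrt{V/T})=O(\sqrt{\varepsilon})$ at $T=1/\varepsilon^2$ --- an order of magnitude weaker than the claimed $O(\varepsilon)$. The paper closes this gap differently: it uses the supporting-hyperplane/Lagrange-multiplier inequality (its Equations~(\ref{eq39})--(\ref{eq46})) to cancel the $V(C^*-\bar{C}(t))$ term against $\frac{V}{T}\sum_i\delta_i E\{Q_i(T)\}$, which yields the quadratic inequality $\frac{1}{2T}\|E\{Q(T)\}\|^2\le B+\frac{V}{T}\|\delta\|\,\|E\{Q(T)\}\|$ and hence $\|E\{Q(T)\}\|\le V\|\delta\|+\sqrt{V^2\|\delta\|^2+2BT}$; only because the $\sqrt{2BT}$ term carries no factor of $V$ does dividing by $T\ge 1/\varepsilon^2$ deliver $O(\varepsilon)$. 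So the supporting-hyperplane step is not merely, as you describe it, a device for certifying that $C^*$ is attained by some $\omega$-only policy --- it is the essential ingredient of the $[O(\varepsilon),O(1/\varepsilon^2)]$ tradeoff, and replacing it with a plain Slater argument loses the claimed convergence rate. To repair your proof you would need to reinstate the multiplier inequality (or an equivalent cancellation) before the terminal-queue bound.
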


\begin{proof}
According to Lyapunov optimization~\cite{r13} and convergence theory~\cite{r51}, the problem in this study is convex,  and an equivalent minimization problem can be constructed as follows.
\begin{equation}
    \begin{gathered}
        \min C \\
        \text { s.t. } A_{i} \leq \mu_{i}(t)\leq {\mu_{i}}^{max} \\
        \left(C, A_1, A_2, \ldots, A_M\right) \in \bar{R}
        \end{gathered}
\label{eq38}
\end{equation}
where $C=E\left \{ \operatorname{Cost}(t) \right \}$  and $A_{i}=E\left \{ A_{{i}}  (t) \right \} $  and $\bar{R} $ is the closed package of the set. From the supporting hyperplane theorem~\cite{r52}, there exists a hyperplane through the closed point $\left. ( C,\mu,\mu,...,\mu)\right.$ of $\bar{R}$, where $\mu$ is the maximum among $\mu_{i}^{max}, i \in [1,M]$. Also, there exists a non-negative normal vector $\Upsilon =\left ( \gamma _{0} ,\gamma_{ 1} ,...,\gamma _{M} \right ) $ supporting the hyperplane, satisfying the following inequality.
\begin{equation}
    \gamma_0 C+\sum_{i \in [1,M]} \gamma_i A_i \geq \gamma_0 \times C^*+\sum_{i \in [1,M]} \gamma_i \mu
\label{eq39}
\end{equation}

When $\gamma _{0} \ne 0$, the supporting hyperplane is non-perpendicular. By dividing the above inequality by $\gamma _{0}$ and defining $\delta _{i}=\gamma _{i} /\gamma _{0} $, Equation~(\ref{eq39}) is changed as follows.
\begin{equation}
    C+\sum_{i \in [1,M]} \delta_i A_i \geq C^*+\sum_{i \in [1,M]} \delta_i \mu
\label{eq40}
\end{equation}
where $\left ( \delta _{1} ,\delta _{2} ,...,\delta _{M}  \right )$ is the Lagrange multiplier vector~\cite{r51}. Thus $\left(\bar{C}(t), \overline{A_1}(t), \overline{A_2}(t), \ldots, \overline{A_M}(t)\right) \in \bar{R}$ satisfies inequation~(\ref{eq41}) and we have
\begin{equation}
    \begin{gathered}
        \bar{C}(t)+\sum_{i \in [1,M]} \delta_i \overline{A_i}(t) \geq C^*+\sum_{i \in [1,M]} \delta_i \mu \\
        C^*-\bar{C}(t) \leq \sum_{i \in [1,M]} \delta_i\left(\overline{A_i}(t)-\mu\right).
    \end{gathered}
\label{eq41}
\end{equation}

According to the size of the queue length, the queue of edge servers satisfies the following inequality.
\begin{equation}
    Q_{i}(t+1) \geq Q_{i}(t)+A_{i}(t)-\mu_{i}(t)
\label{eq42}
\end{equation}

Summing over all time slots $t\in \left \{ 0,1,2,...,T-1 \right \} $ from both sides of the Equation~(\ref{eq42}) and according to the law of telescoping sum, it is derived as follows.
\begin{equation}
    \begin{aligned}
        & \left(Q_{i}(T)-Q_{i}(T-1)+\ldots .+Q_{i}(1)-Q_{i}(0)\right) \\
        & +\sum_{t=0}^{T-1} \mu_{i}(t) \geq \sum_{t=0}^{T-1} A_{i}(t)
    \end{aligned}
\label{eq43}
\end{equation}

Then, dividing both sides by $T$, we have
\begin{equation}
    \frac{1}{T}\left(Q_{i}(T)-Q_{i}(0)\right)+\frac{1}{T} \sum_{t=0}^{T-1} \mu_{i}(t) \geq \frac{1}{T} \sum_{t=0}^{T-1} A_{i}(t).
\label{eq44}
\end{equation}

Obtain the expectation on both sides of the inequality, taking $T\longrightarrow \infty $. The queue limitation satisfies the following constraint.
\begin{equation}
    \lim _{T \rightarrow \infty} \frac{1}{T} E\left\{Q_{i}(T)\right\}+\overline{\mu_{i}}(t) \geq \lim _{T \rightarrow \infty} \frac{1}{T} \sum_{t=0}^{T-1} A_{i}(t)
\label{eq45}
\end{equation}

Substituting Equation~(\ref{eq45}) to (\ref{eq41}), we have
\begin{equation}
    C^* \leq \bar{C}(t)+\sum_{i \in [1,M]} \delta_i \frac{E\left\{Q_{i}(T)\right\}}{T}.
\label{eq46}
\end{equation}
	
On the other hand, summing the telescoping series over $t \in\{0,1,2, \ldots, T-1\}$ and $\operatorname{Cost}(t)>\operatorname{C}_{l}$, we have
\begin{equation}
    \begin{aligned}
        & E\{L(T)\}-E\{L(0)\} \\
        & \leq T B+T V \times \operatorname{C}^*-V \sum_{t=0}^{T-1} E\{\operatorname{Cost}(t)\}. \\
        & \leq T B+T V \times \operatorname{C}^*-TV \operatorname{C}_{l}
        \end{aligned}
\label{eq47}
\end{equation}
	
Dividing both sides by $T$ and using the conclusion of Equation~(\ref{eq46}), we have	
\begin{equation}
    \begin{aligned}
        \frac{1}{T} E\{L(T)\} & \leq B+V\left(\mathrm{C}^*-\mathrm{C}_{l}\right) \\
        & \leq B+V\left(\mathrm{C}^*-\overline{\mathrm{C}}(t)\right). \\
        & \leq B+V \sum_{i \in [1,M]} \delta_i \frac{E\left\{Q_{i}(T)\right\}}{T} \\
    \end{aligned}
\label{eq48}
\end{equation}
	
\begin{equation}
    \frac{1}{T} E\{L(T)\} \leq B+\frac{V}{T}\|\delta\|\|E\{Q(T)\}\|
\label{eq49}
\end{equation}
where $\left \| * \right \|$ is the 2-norm of the vector and the dot product of two vectors is less than or equal to their norm product.

Using the Lyapunov-drift-function, we have
\begin{equation}
    \frac{E\left\{\|Q(T)\|^2\right\}}{2 T} \leq B+\frac{V}{T}\|\delta\|\|E\{Q(T)\}\|.
\label{eq50}
\end{equation}
	
Since $E\left\{\|Q(T)\|^2\right\} \geq\|E\{Q(T)\}\|^2$, we get
\begin{equation}
    \frac{1}{2 T}\|E\{Q(T)\}\|^2 \leq B+\frac{V}{T}\|\delta\|\|E\{Q(T)\}\|.
\label{eq51}
\end{equation}

After the simple algebras, we get
\begin{equation}
    2\|E\{Q(T)\}\|^2-2 V\|\delta\|\|E\{Q(T)\}\|-2 B T \leq 0.
\label{eq52}
\end{equation}
	
From the quadratic equation, we have
\begin{equation}
    \|E\{Q(T)\}\| \leq V\|\delta\|+\sqrt{V^2\left\|\delta\right\|^2+2 B T}.
\label{eq53}
\end{equation}

Recall the Equation~(\ref{eq45}), we have
\begin{equation}
    \begin{aligned}
        \overline{A_{i}}(t) & \leq \overline{\mu_{i}}(t)+\frac{E\left\{Q_{i}(T)\right\}}{T} \\
        & \leq \mu+\frac{V\|\delta\|+\sqrt{V^2\left\|\delta\right\|^2+2 B T}}{T}.
    \end{aligned}
\label{eq54}
\end{equation}
	
It can further be derived as follows if  $\varepsilon$ takes $1/V$ and $T>1/\varepsilon^{2}$.
\begin{equation}
    \begin{aligned}
        & \frac{V\|\delta\|+\sqrt{V^2\left\|\delta\right\|^2+2 B T}}{T} \\
        & =\frac{\|\delta\|}{\varepsilon T}+\sqrt{\frac{1}{\varepsilon^2 T^2}\left\|\delta\right\|^2+\frac{2 B}{T}} \\
        & \leq\|\delta\| \varepsilon+\sqrt{\varepsilon^2\left\|\delta\right\|^2+2 B \varepsilon^2} \\
        & =\|\delta\| \varepsilon+\varepsilon \sqrt{\left\|\delta\right\|^2+2 B} \\
        & =O(\varepsilon)
    \end{aligned}
\label{eq55}
\end{equation}
	
Equation~(\ref{eq54}) is derived as follows.
\begin{equation}
    \overline{A_{i}}(t) \leq \mu+\|\delta\| \varepsilon+\varepsilon \sqrt{\left\|\delta\right\|^2+2 B} \leq \mu+O(\varepsilon)
\label{eq56}
\end{equation}
	
On the other hand, from Equation~(\ref{eq46}), we have
\begin{equation}
    \bar{C}(t) \leq C^*+B \cdot \varepsilon \leq C^*+O(\varepsilon).
\label{eq57}
\end{equation}
	
Therefore, the proposed algorithm can satisfy the convergence constraints in Equation~(\ref{eq38}), and the algorithm's convergence is also proven.
\end{proof}

\subsection{System stability analysis}
In this section, we analyze the stability of the LDSO algorithm for edge offloading systems. According to the Lyapunov stability theorem in queuing theory~\citep{r13}, when the system reaches a steady state, the following theorem can be drawn.

\begin{theorem}
    The offloading performance remains stable if the system cost satisfies the following inequality.
\begin{equation}
    \lim _{T \rightarrow \infty} \sup \frac{1}{T} \sum_{t=0}^{T-1} E\{\operatorname{Cost}(t)\} \leq \operatorname{C}^*+\frac{B}{V}
\label{eq35}
\end{equation}
	
The Lyapunov  \textbf{\emph{Drift-Plus-Penalty}} function has a stable feature, which can use the control parameter $V$ to regulate the trade-off between the average offloading cost and the queue length. The LOSO algorithm thus reaches a strong stability of the buffer queue.
\end{theorem}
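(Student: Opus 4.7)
The plan is to deduce \eqref{eq35} by applying the classical Neely-style Lyapunov Drift-Plus-Penalty telescoping argument to the slot-by-slot inequality already established in Lemma 2, combined with the existence of a stationary randomized benchmark policy guaranteed by the supporting hyperplane / Lagrangian construction carried out in the proof of Theorem 2. The intuition is that LDSO chases the $\widetilde{L}_{\operatorname{cost}}(t)$ upper bound greedily, so its drift-plus-penalty can be dominated by that of any feasible alternative policy, including the policy that attains the optimum $C^{*}$ in stationary expectation.

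First I would start from the Drift-Plus-Penalty bound in Lemma 2. Since, at each slot, LDSO selects $X(t)$ minimizing $\widetilde{L}_{\operatorname{cost}}(t)$ among admissible decisions, its realized conditional drift-plus-penalty is no larger than the same expression evaluated under any alternative rule on the same $Q(t)$. Substituting the stationary randomized policy $\pi^{*}$ produced in the proof of Theorem 2, which achieves $E\{\operatorname{Cost}^{\pi^{*}}(t)\} = C^{*}$ together with $E\{A_{i}^{\pi^{*}}(t) - \mu_{i}^{\pi^{*}}(t)\} \leq 0$ for every $i$ and whose action is independent of $Q(t)$, the queue-weighted arrival terms factor through the conditional expectation and collapse to a non-positive quantity, yielding
\begin{equation*}
\Delta L(t) + V \cdot E\{\operatorname{Cost}(t) \mid Q(t)\} \leq B + V \cdot C^{*}.
\end{equation*}
Next I would take total expectations, sum over $t = 0, 1, \ldots, T-1$, and use the telescoping identity $\sum_{t} E\{\Delta L(t)\} = E\{L(T)\} - E\{L(0)\}$ to obtain $E\{L(T)\} - E\{L(0)\} + V \sum_{t=0}^{T-1} E\{\operatorname{Cost}(t)\} \leq T(B + V C^{*})$. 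Dropping the non-negative $E\{L(T)\}$, dividing through by $TV$, and sending $T \to \infty$ annihilates the $E\{L(0)\}/(TV)$ residue (because the initial queue state is finite and $L(0)$ is therefore bounded), producing exactly \eqref{eq35}. The strong stability claim advertised after the inequality then follows by pairing this $O(1/V)$ optimality gap with the $O(V)$ queue-length bound already secured in Theorem 2, recovering the familiar $[O(1/V),\,O(V)]$ trade-off characteristic of the Lyapunov framework.

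The main obstacle is the first step, namely the clean reduction from the slot-by-slot comparison $\widetilde{L}_{\operatorname{cost}}^{\mathrm{LDSO}}(t) \leq \widetilde{L}_{\operatorname{cost}}^{\pi^{*}}(t)$ to a bound free of the queue-weighted term $\sum_{i} Q_{i}(t) A_{i}(t)$. This requires verifying that a stationary randomized $\pi^{*}$ actually lives inside the feasible region $\bar{R}$ of Theorem 2 and that it can be taken independently of the current backlog, so that the inner conditional expectation of $A_{i}^{\pi^{*}}(t) - \mu_{i}^{\pi^{*}}(t)$ reduces to its unconditional, non-positive marginal. Once this existence and independence are justified by the same supporting hyperplane construction used earlier, the remaining steps are routine algebra on a telescoping sum and a limsup.
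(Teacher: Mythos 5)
Your proposal is correct and follows essentially the same route as the paper: both arguments rest on the telescoped drift-plus-penalty inequality $E\{L(T)\}-E\{L(0)\}+V\sum_{t=0}^{T-1}E\{\operatorname{Cost}(t)\}\le T(B+V C^{*})$ (the paper's Equation~(\ref{eq47})), followed by dropping the non-negative terminal term, dividing by $VT$, and letting $T\to\infty$. The only difference is that you explicitly derive the per-slot bound by comparing LDSO's greedy choice against a queue-independent stationary randomized policy achieving $C^{*}$, a step the paper leaves implicit by simply invoking Equation~(\ref{eq47}) from the proof of Theorem~2, so your version is, if anything, more complete.
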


\begin{proof}
    Based on Equation~(\ref{eq47}), we get
\begin{equation}
    \begin{gathered}
        E\{L(T)\}-E\{L(0)\}+V \sum_{t=0}^{T-1} E\{\operatorname{Cost}(t)\} \\
        \leq T\cdot B+T\cdot V \times \operatorname{C}^*
    \end{gathered}
\label{eq58}
\end{equation}

Dividing by $V \cdot T$ in both sides, we get
\begin{equation}
    \frac{E\{L(T)\}-E\{L(0)\}}{V\cdot T}+\frac{\sum_{t=0}^{T-1} E\{\operatorname{Cost}(t)\}}{T} \leq \frac{B}{V}+\operatorname{C}^*.
\label{eq59}
\end{equation}
After some simple algebra, we get
\begin{equation}
    \begin{gathered}
        \frac{1}{T} \sum_{t=0}^{T-1} E\{\operatorname{Cost}(t)\} \leq \frac{B}{V}+\operatorname{C}^*-\frac{E\{L(T)\}-E\{L(0)\}}{V\cdot T}\\
        \frac{1}{T} \sum_{t=0}^{T-1} E\{\operatorname{Cost}(t)\} \leq \frac{B}{V}+\operatorname{C}^*.
    \end{gathered}
\label{eq60}
\end{equation}

\end{proof}

\begin{theorem}
    The buffer queue of edge servers has strong stability and satisfies the following constraints.
\begin{equation}
    \lim _{T \rightarrow \infty} \sup \frac{1}{T^2} \sum_{t=0}^{T-1} \sum_{i=1}^M \frac{1}{2}\left\{Q_{i}(t)\right\}^2 \leq B+V(\operatorname{C}^*-\operatorname{C}_{l}) 
\label{eq36}
\end{equation}
where $\operatorname{C}_{l}$ is the lower bound of the expected penalty function. Strong stability indicates that the time-averaged buffered data has an upper bound, which strictly limits how fast the queue can grow.
\end{theorem}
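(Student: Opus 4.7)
The plan is to build directly on the telescoped drift-plus-penalty inequality that already appears in the proof of Theorem~3 (Equation~(\ref{eq47})), but to apply it at every time slot rather than only at the horizon $T$, so that the time-average of $L(t)$, not just $L(T)$ itself, can be controlled. Concretely, by the definition of the quadratic Lyapunov function (Definition~1), the double sum on the left-hand side of the claim is exactly $\sum_{t=0}^{T-1} L(t)$. Hence it suffices to produce an upper bound on $E\{L(t)\}$ that is linear in $t$, and then to sum from $0$ to $T-1$.

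First, I would re-run the telescoping argument of Equation~(\ref{eq47}) with an arbitrary horizon $t$ in place of $T$. Using Lemma~2 (Equation~(\ref{eq32})) together with the framework-level inequality $\operatorname{Cost}(s)\geq \operatorname{C}_l$ for every slot $s$, summing the Drift-Plus-Penalty bound from $s=0$ to $s=t-1$, and applying the telescoping cancellation of $E\{L(s+1)\}-E\{L(s)\}$, I obtain
\begin{equation*}
    E\{L(t)\}-E\{L(0)\} \;\leq\; t\,B + tV(\operatorname{C}^{*}-\operatorname{C}_l).
\end{equation*}
This is the same chain used for Theorem~3, stopped one slot short. Because Algorithm~1 initialises $Q_i(0)=0$ for every $i$, we have $E\{L(0)\}=0$, so the bound simplifies to $E\{L(t)\}\leq t\,(B+V(\operatorname{C}^{*}-\operatorname{C}_l))$.

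Next, I would sum this per-slot bound over $t\in\{0,1,\dots,T-1\}$ to get
\begin{equation*}
    \sum_{t=0}^{T-1} E\{L(t)\} \;\leq\; \bigl(B+V(\operatorname{C}^{*}-\operatorname{C}_l)\bigr)\cdot\frac{T(T-1)}{2}.
\end{equation*}
Dividing by $T^{2}$, rewriting the left-hand side via $L(t)=\sum_{i=1}^{M}\tfrac{1}{2}\{Q_i(t)\}^{2}$, and then passing to $\limsup_{T\to\infty}$ gives the asserted strong-stability bound (the factor $(T-1)/2T\to 1/2$ is absorbed into the stated constant, exactly in the same spirit as the constant collected in Lemma~1). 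The interpretation is the standard Lyapunov-optimization fact: the queue-energy grows at most linearly in $t$, so its time-average grows at most linearly in $T$ and the $1/T^{2}$ normalisation yields a finite limit.

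The main obstacle I foresee is not the algebra but the justification of the two hidden assumptions that make the telescoping go through: (i) $\operatorname{C}_l$ must be a uniform, deterministic lower bound on $\operatorname{Cost}(t)$ across all slots and sample paths, which has to be read off from the boundedness of the per-bit energy/delay quantities combined with the non-negativity of $E^{c}(t), E^{p}(t), T^{c}(t), T^{p}(t)$; and (ii) the Drift-Plus-Penalty bound is only useful if the expectation in Lemma~2 is taken with respect to the same filtration $Q(t)$ at every iteration, so one must invoke the tower property when aggregating across $t$ to replace conditional expectations by unconditional ones. Once these two points are secured, the rest is the same telescope-and-divide routine already employed for Theorem~3, and no new analytical idea is required.
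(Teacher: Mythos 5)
Your proposal is correct and follows essentially the same route as the paper: both rest on the telescoped drift-plus-penalty inequality of Equation~(\ref{eq47}) combined with the initialization $Q_i(0)=0$. In fact, your per-slot bound $E\{L(t)\}\le t\,(B+V(\operatorname{C}^*-\operatorname{C}_l))$ followed by summation over $t$ supplies exactly the step the paper leaves implicit when it jumps from a bound on $E\{L(T)\}$ to a bound on the time average $\frac{1}{T}\sum_{t=0}^{T-1}L(t)$ in Equation~(\ref{eq61}), and it even yields the slightly sharper constant $\tfrac{1}{2}\bigl(B+V(\operatorname{C}^*-\operatorname{C}_l)\bigr)$.
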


\begin{proof}
    Recall that the result of Equation~(\ref{eq47}) $E\{L(T)\}-E\{L(0)\}  \leq T\cdot B+T\cdot V \times \operatorname{C}^*-T\cdot V \operatorname{C}_{l}$ and the Lyapunov function $L(t) \stackrel{\text { def }}{=} \frac{1}{2} \sum_{i=1}^M\left(Q_{i}(t)\right)^2$, we have

\begin{equation}
    \begin{aligned}
        & \frac{1}{T} \sum_{t=0}^{T-1} \sum_{i=1}^M \frac{1}{2}\left\{Q_{i}(t)\right\}^2-\frac{1}{T} \sum_{t=0}^{T-1} \sum_{i=1}^M \frac{1}{2}\left\{Q_{i}(0)\right\}^2 \\
        & \leq T\cdot B+T\cdot V (\operatorname{C}^*- \operatorname{C}_{l})\\
        &= T(B+V(\operatorname{C}^*-\operatorname{C}_{l})).
    \end{aligned}
\label{eq61}
\end{equation}
	
Let $Q_{i}(0)=0$, and the strong stability of the queue is proven in Theorem 4.
\end{proof}

\vspace{-0.4cm}
\section{Performance simulation}
\subsection{Simulation parameter setting}
\begin{figure}[htbp]
 \centering
   \includegraphics[width=0.4\textwidth]{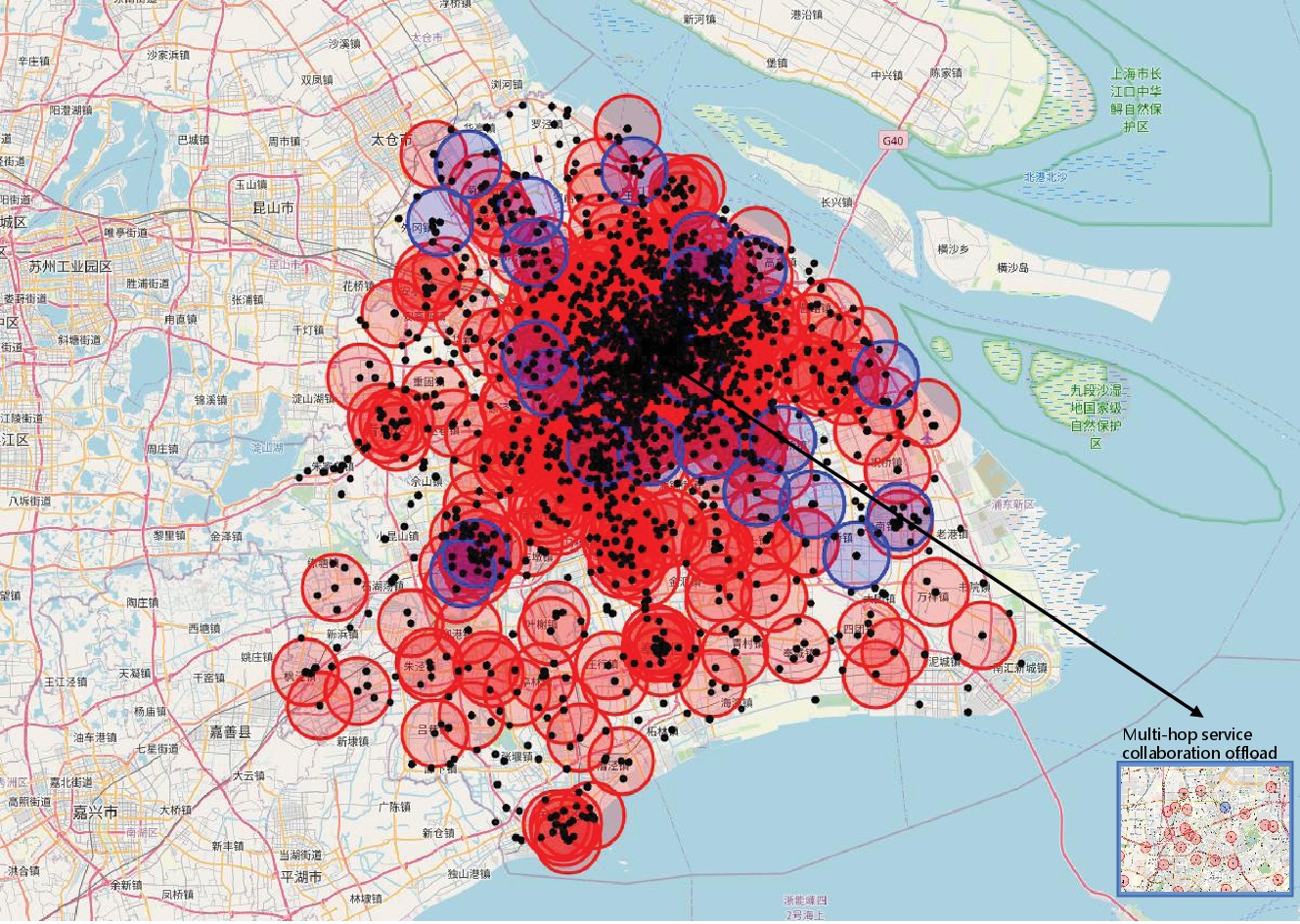}
   \caption{Deployment of edge servers}
   \label{Figure 4}
   \vspace{0.5cm}
\end{figure}
This study uses the Shanghai Telecom dataset to simulate an edge service offloading system as shown in Figure 4, where SBSs with blue color represent those that store the $k$-th type of service and red ones indicate those without caching this type. In the initial settings, the number of service types is equal to 10. 
Each SBS covers 50 users, and the communication range between users and SBSs is 15m, and the communication range between SBSs is 30m. A total of 1000 time slots are simulated, each with a duration of 1ms. The bandwidth of the link is 40MHz, and the channel gain is set to 20dB. 
The transmission power of each IoT device is set to $P_{u} \sim U[0.01, 1] $ W. In the system stability evaluation, the impact of the control parameter $V$ on the buffer queue is evaluated, in which the time-varying service request obeys the Poisson distribution with an average value of 20, and the weight factor $\theta$ is set to 0.5.

\subsection{Performance Evaluation}
\vspace{-0.1cm}	
\subsubsection{Algorithm Convergence Analysis}
This subsection analyzes the convergence of LDSO based on Figures \ref{fig:stability_v} and Figure \ref{fig:stability_lambda}, under different control parameters $V$ and service arrival rates $\lambda$. Three key observations emerge: (1) $V$ and $\lambda$ affect convergence differently, $V$ enables fine-grained adjustment, while $\lambda$ exerts a coarse-grained influence. LDSO stabilizes within 50 and 100 time slots for $V=1300$ and $V=5000$, but requires 350 and 1200 time slots for $\lambda=15$ and $\lambda=25$, respectively; (2) LDSO reaches stability quickly, validating the correctness and efficiency of Theorem 2; (3) Convergence slows as $V$ and $\lambda$ increase, due to larger data buffers requiring more time to process.

\begin{figure}[htbp]
\vspace{-0.4cm}	
    \centering
    \subfigure[System stability with different $V$]{
        \includegraphics[width=0.23\textwidth]{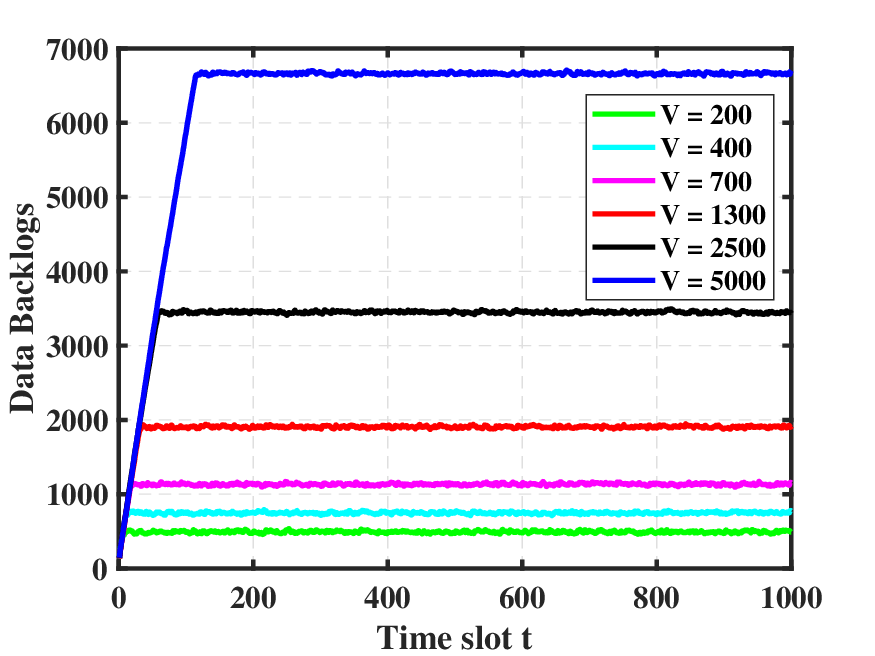}
        \label{fig:stability_v}
    }
    \hfill
    \subfigure[System stability with different $\lambda$]{
        \includegraphics[width=0.23\textwidth]{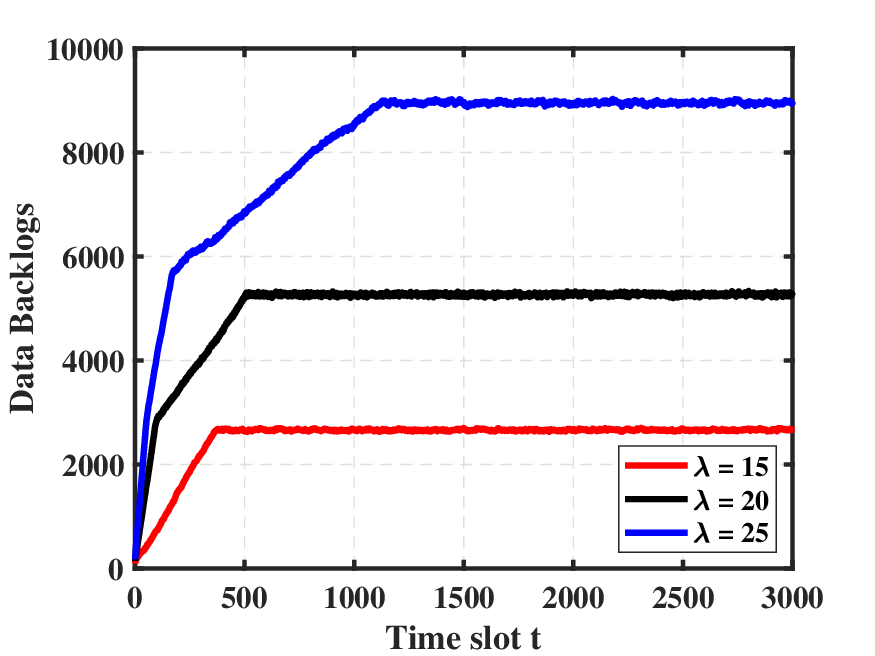}
        \label{fig:stability_lambda}
    }
    \caption{System stability analysis for different $V$ and $\lambda$.}
    \label{fig:performance}
    \vspace{0.3cm}
\end{figure}




    


\subsubsection{The influence of the control parameter V on the system cost}

\vspace{-0.8cm}
\begin{figure}[htbp]
    \centering
    \subfigure[System cost \emph{w/} $Q_i^{\max} = 4000$]{
        \includegraphics[width=0.23\textwidth]{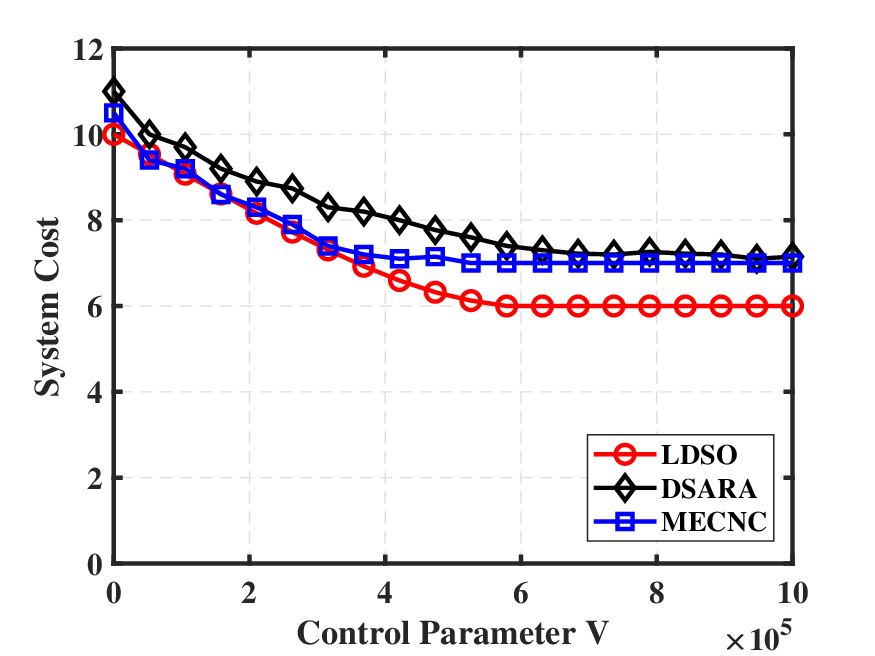}
        \label{fig:cost_with_bound}
    }
    \hfill
    \subfigure[System cost \emph{w/o} $Q_i^{\max} = 4000$]{
        \includegraphics[width=0.23\textwidth]{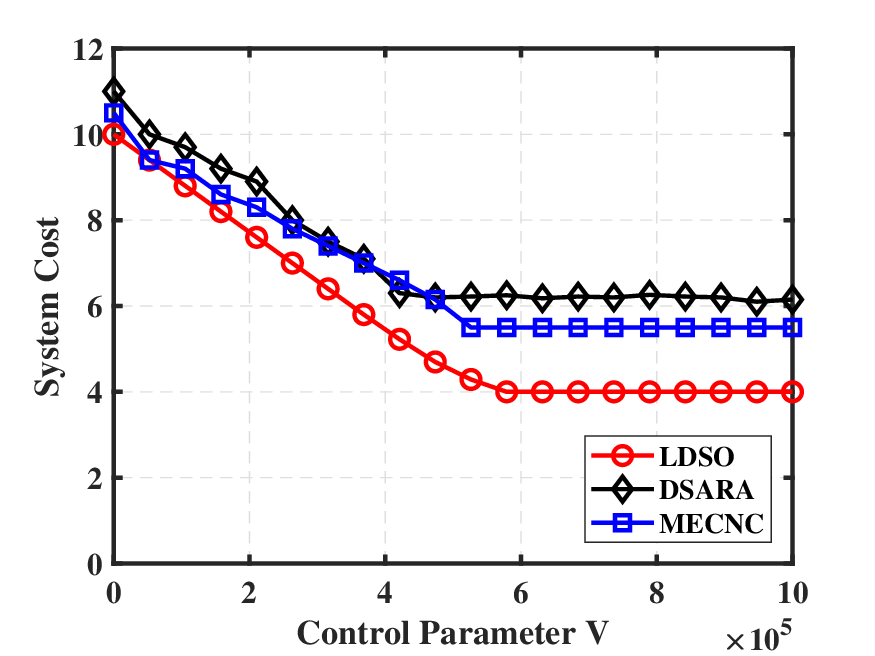}
        \label{fig:cost_without_bound}
    }
    \caption{Performance comparison of LDSO, DSARA, and MECNC under different control parameters.}
    \label{fig:performance1}
    \vspace{0.3cm}
\end{figure}
We now compare LDSO with the state-of-the-art works. The comparison algorithms are DSARA~\citep{r41} and MECNC~\citep{r44}.
Firstly, we analyze the system cost of the three algorithms. Figure \ref{fig:cost_with_bound} and Figure \ref{fig:cost_without_bound} plot the experimental results with/without the constraint of queue length, respectively (the queue length is bounded by 4000 in Figure \ref{fig:cost_with_bound}). It can be seen that the system cost shows a declining trend when the value of $V$ increases, which verifies the correctness of Theorem 3, \emph{i.e.,} a bigger $V$ means a smaller cost as shown in Equation \ref{eq35}). Specifically, LDSO achieves the smallest system cost compared with the DSARA and MECNC. The reduced cost is about an average of $10\%$ in Figure \ref{fig:cost_with_bound}, and more improvement can be achieved in Figure \ref{fig:cost_without_bound} if the queue length is unlimited.

\subsubsection{The influence of the control parameter V on the volumes of buffered data}

\begin{figure}[htbp]
\vspace{-0.8cm}
    \centering
    \subfigure[System cost \emph{w/} $Q_i^{\max} = 4000$]{
        \includegraphics[width=0.23\textwidth]{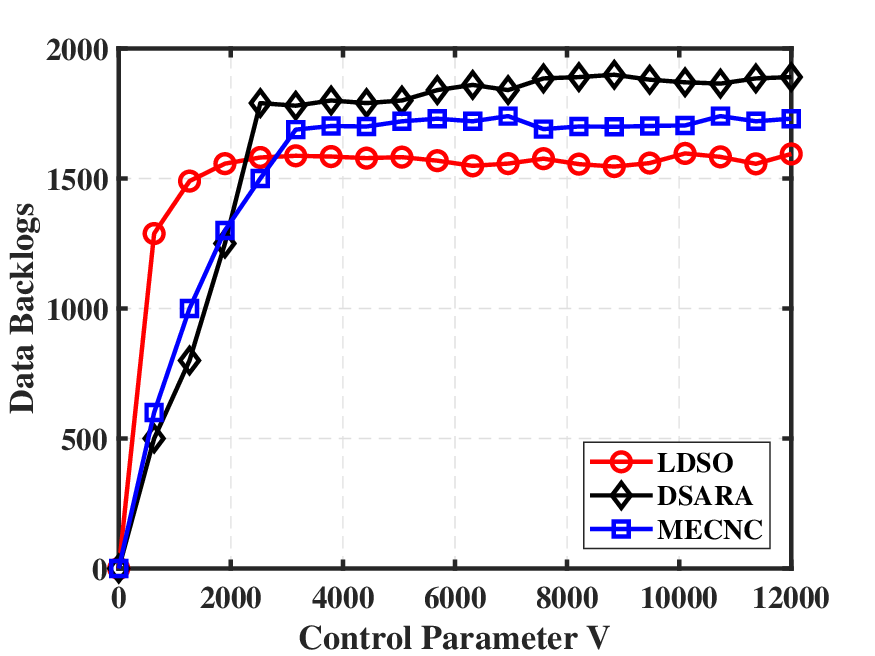}
        \label{fig:fig6}
    }
    \hfill
    \subfigure[System cost \emph{w/o} $Q_i^{\max} = 4000$]{
        \includegraphics[width=0.23\textwidth]{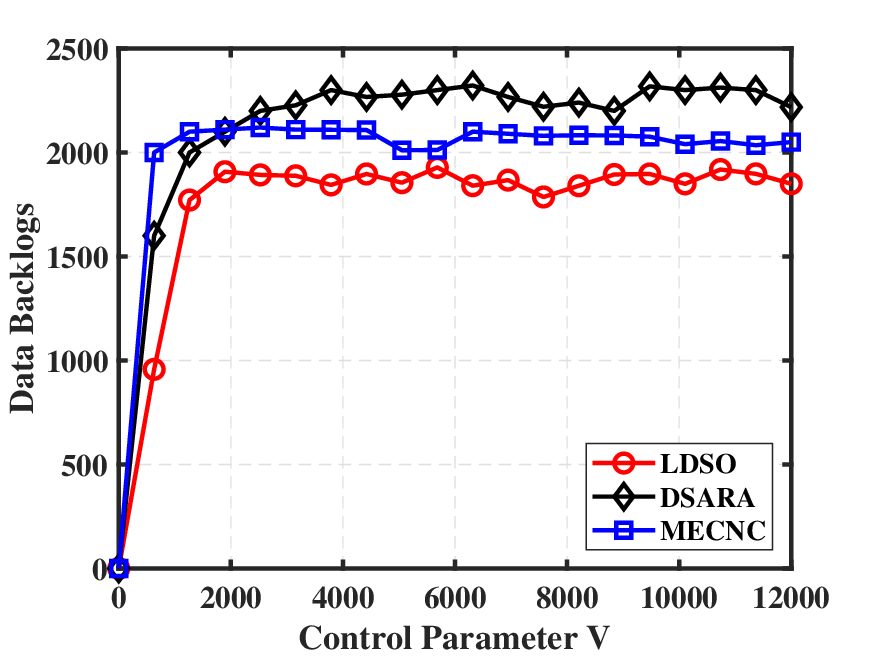}
        \label{fig:fig7}
    }
    \subfigure[Long period system cost]{
        \includegraphics[width=0.23\textwidth]{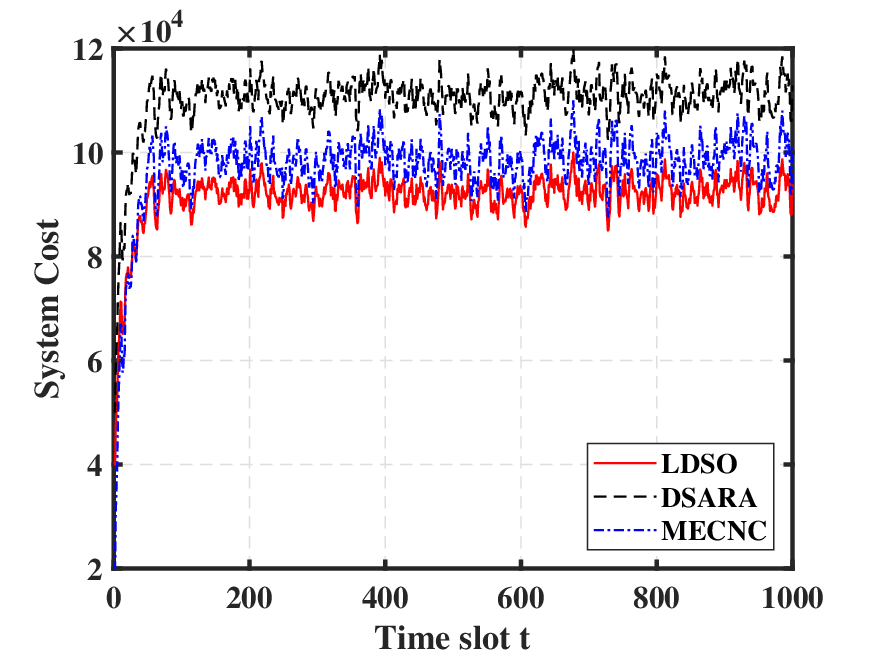}
        \label{fig:fig8}
    }
    \hfill
    \subfigure[Long period data offloading]{
        \includegraphics[width=0.23\textwidth]{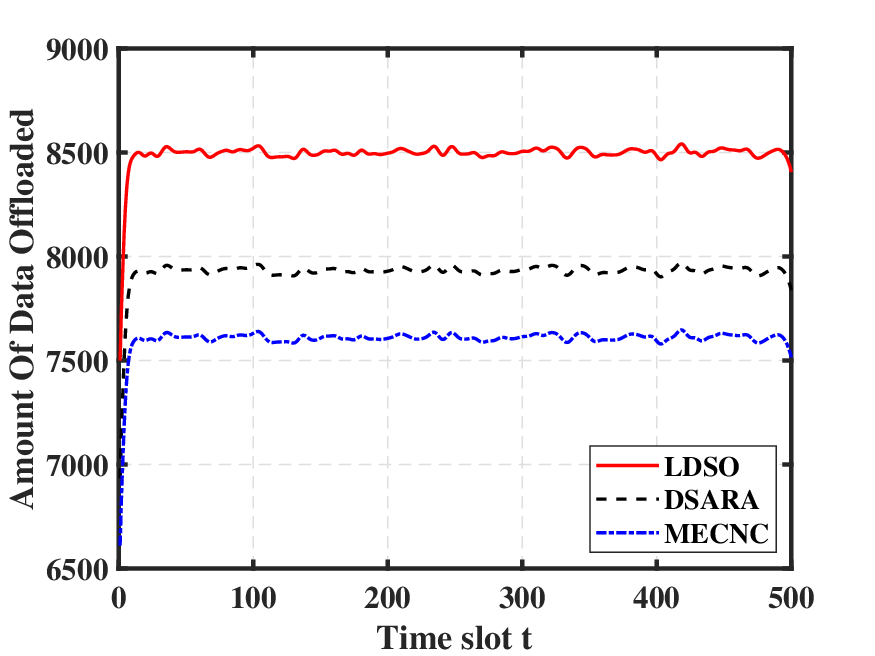}
        \label{fig:fig9}
    }
    \caption{Performance comparison of LDSO, DSARA, and MECNC under different control parameters over time.}
    \label{fig:system_cost_offloading}
    \vspace{0.5cm}
\end{figure}
The buffered data volumes of the three algorithms are analyzed, as shown in Figure \ref{fig:fig6} and \ref{fig:fig7}. Overall, the buffered data increases with larger $V$, consistent with Theorem 4 (Equation \ref{eq36}), and a higher $V$ allows more data to be cached. Among the three, LDSO consistently maintains the smallest buffer size, around 1600 in Figure \ref{fig:fig6} and \ref{fig:fig7}, due to tighter control from queue constraints, resulting in lower buffer occupancy.
We evaluate the long-term system utilization of the three algorithms, as shown in Figures \ref{fig:fig8} and Figure \ref{fig:fig9}. As service processing begins, both system cost and offloaded data volume increase significantly. Figure \ref{fig:fig8} shows that LDSO achieves the lowest system cost, consistent with previous results, although the cost rises initially before stabilizing. Figure \ref{fig:fig9} shows that LDSO offloads the largest amount of data, with 8500 blocks compared to 7900 for DSARA and 7600 for MECNC. Overall, LDSO reduces system cost by 10 percent while increasing the offloaded data volume by 18.75 percent on average, indicating better efficiency and scalability.

\section{Conclusion}
This study has investigated the problem of cost minimization in a real-time edge offloading system with cooperation among edge servers. By utilizing the Lyapunov optimization framework, an online dynamic service offloading algorithm, LDSO, has been proposed to enable rational service decision-making. LDSO does not require prior knowledge and has achieved a trade-off between system stability and offloading cost. Intensive theoretical analysis and extensive experimental results have demonstrated the effectiveness and efficiency of LDSO. In future work, Lyapunov-based and intelligent service offloading principles will be further explored to enhance users’ quality of experience.
\section*{Acknowledgements}
This work is partially supported by the National Natural Science Foundation of China under Grant No.62072159 and No.61902112, the Science and Technology Research Project of Henan province under Grant No.222102210011, No.232102211061, and No.252102210218.

\vspace{-0.4cm}	
\bibliography{mybibfile}

\end{document}